\theoremstyle{plain}
\newtheorem{theorem}{Theorem}
\newtheorem{proposition}[theorem]{Proposition}
\newtheorem{lemma}[theorem]{Lemma}
\newtheorem{procedure}{Procedure}
\theoremstyle{definition}
\newtheorem{example}[theorem]{Example}
\newtheorem{notation}[theorem]{Notation}
\newtheorem{remark}[theorem]{Remark}
\newtheorem{algorithm}{Algorithm}
\newtheorem{subalgorithm}{Subalgorithm}
\newcommand{\numberset}{\mathbb}
\newcommand{\F}{\numberset{F}}
\newcommand{\FF}{{\mathbb F}}
\DeclareMathOperator{\divv}{div}
\DeclareMathOperator{\monic}{monic}
\DeclareMathOperator{\Subalg}{Subalg}
\DeclareMathOperator{\Poly}{Poly}
\title{Scalar multiplication in compressed coordinates in the trace-zero subgroup}
\author{Giulia Bianco and Elisa Gorla\thanks{The research reported in this paper was partially supported by the Swiss National Science Foundation under grant no. 200021\_150207.}}
\affil{Institut de Math\'{e}matiques, Universit\'{e} de Neuch\^{a}tel\\Rue Emile-Argand 11, CH-2000 Neuch\^{a}tel, Switzerland}
\date{}
\begin{document}
\maketitle

\begin{abstract}
\noindent
We consider trace-zero subgroups of elliptic curves over a degree three field extension. 
The elements of these groups can be represented in compressed coordinates, i.e. via the two coefficients of the line that passes through the point and its two Frobenius conjugates. 
In this paper we give the first algorithm to compute scalar multiplication in the degree three trace-zero subgroup using these coordinates. 
\end{abstract}

\section*{Introduction}\label{intr0}

Given an elliptic curve $E$ defined over a finite field $\FF_q$, an odd prime $n$ and the group $E(\FF_{q^n})$ of $\FF_{q^n}$-rational points of $E$, the trace-zero subgroup $T_n$ of $E(\FF_{q^n})$ consists of the $\FF_{q^n}$-rational points of $E$ whose trace is zero. 
Trace-zero subgroups were first proposed for cryptographic applications by Frey in \cite{Frey},
and they turn out to provide good security, efficient computation, and optimal data storage.

It is easy to show that solving the DLP in $T_n$ is as hard as solving the DLP in the entire group $E(\FF_{q^n})$ (see e.g. \cite[Proposition 1]{EM1}). Moreover, if $E$ is supersingular, an analogous result holds for the security parameter in the contest of pairing-based cryptography (see \cite{rs02} and \cite{rs09}). In particular, the cardinality of $T_3\subseteq E(\FF_{q^3})$ is in the range of $q^2$ and the complexity of the DLP is $\mathcal{O}(q)$, that is, the square root of the group order (see \cite[Section 22.3.4.b]{handbook}). Hence, from the point of view of security, the degree three trace-zero subgroup of an elliptic curve defined over $\FF_q$ is comparable to the group of points of an elliptic curve over a ground field $\FF_p$, where $p$ is in the range of $q^2$.

On the other hand, Weil restriction of scalars allows us to regard $E(\FF_{q^n})$ as the set of $\FF_q$-rational points of a variety of dimension $n$ defined over $\mathbb{F}_q$, and $T_n$ as the set of $\FF_q$-rational points of a subvariety of dimension $n-1$. Hence one would like to be able to represent the elements of $T_n$ via $n-1$ $\FF_q$-coordinates, as opposed to the $n$ $\FF_q$-coordinates needed to represent an element of $E(\FF_{q^n})$. Optimal representations for the degree $n$ trace-zero subgroup of an elliptic curve have been proposed by Naumann in \cite{nau99} for $n=3$, Silverberg in \cite{sil05} and Cesena in \cite{ces08} for $n=3,5$, and Gorla-Masserier in~\cite{EM1} for small values of $n$ and in~\cite{EM2} for any $n$. Optimal coordinates for the degree $n$ trace-zero subgroup of a hyperelliptic curves of genus $g$ were proposed by Lange in \cite{langetzero} for $g=2$ and $n=3$, and by Gorla-Massierer in \cite{EM2} for any $g\geq 1$ and $n\geq 2$. 

In order to take full advantage of the optimal representation size for level of security in trace-zero subgroups, one needs efficient algorithms to perform arithmetic on the group elements represented in compressed coordinates. There are two natural ways to perform scalar multiplication in $T_n$: One can either compute scalar multiplication in $E(\FF_{q^n})$ and use compression and decompression algorithms to go back and forth between the usual coordinates in $E(\FF_{q^n})$ and the compressed coordinates in $T_n$, or compute scalar multiplication directly in compressed coordinates in $T_n$. 

The first approach is relatively straightforward: In all previously quoted work dealing with optimal representations in $T_n$, the authors provide compression and decompression algorithms. There is a wealth of knowledge on how to efficiently perform scalar multiplication on elliptic curves and, in addition, the Frobenius endomorphism $\varphi$ on the curve allows us to speed up scalar multiplication in $E(\FF_{q^n})$, as explained in \cite[Sections 15.1 and 15.2]{handbook}. Following this approach, computing scalar multiplication in $T_3$ is usually faster than in the group of rational points of a curve over a ground field of prime size in the range of $q^2$. Observe also that in $T_3$ scalar multiplications can be further sped up by using the relation $\varphi^2+\varphi+1=0$ involving the Frobenius endomorphism (see \cite[Section 15.3]{handbook}, \cite{ac07}, \cite{TrZeroAvanzi}, \cite{langetzero}, \cite{nau99}, \cite{Weim}). Using the same approach, one can also speed up the computation of the Miller function for the Tate pairing, in the context of pairing-based cryptography (see \cite{ces08}). 

The second approach is performing scalar multiplication in $T_n$ in the optimal compressed coordinates. To the extent of our knowledge, no such algorithm has been proposed yet. In this paper, we give an algorithm to perform scalar multiplication in the degree three trace-zero subgroup of an elliptic curve, in the representation proposed in \cite{EM2}.
Namely, let $E$ be an elliptic curve over $\FF_q$, whose degree three trace-zero subgroup $T_3$ is cyclic of prime order $p$. Our algorithm takes as input an integer $m$ modulo $p$ and the line through $P\in T_3$ and its Frobenius conjugates, and it returns the line through the point $mP$ and its Frobenius conjugates. Our algorithm has interesting similarities with the Montgomery ladder algorithm for computing scalar multiplication for elliptic curves, when the points are represented using their $x$-coordinate (see \cite{Mont} and \cite[Section 13.2.3.d]{handbook}). Moreover, our algorithm adapts the above mentioned strategy for exploiting the relation $\varphi^2+\varphi+1=0$ satisfied by the Frobenius endomorphism. Hence, we can maintain the advantages of such a strategy, even performing the operation directly in compressed coordinates. 

The paper is organized as follows. In Section~\ref{prelim} we establish the notations and some preliminaries on the degree three trace-zero subgroup of an elliptic curve. We also 
present some procedures for computation, that will be used in the subsequent algorithms. 
In Section~\ref{main} we present our algorithm for scalar multiplication. 
Subsection~\ref{subalg} contains a subalgorithm that will be called by the main algorithms, and a lemma which allows us to deal with special cases. 
In Subsection~\ref{montgomery} we propose a Montgomery-ladder-style algorithm which computes scalar multiplication in $T_3$. The algorithm makes use of the subalgorithm of Subsection~\ref{subalg}. In Subsection~\ref{combined} we exploits the properties of the Frobenius endomorphism to obtain an optimized version of the Montgomery-ladder-style algorithm of Subsection~\ref{montgomery}. The resulting algorithm efficiently computes scalar multiplication in $T_3$. In the Appendix we give the explicit formulas that we have computed and that we use for computation.

\section{Setting, notation, and formulas}\label{prelim}

\subsection{Preliminaries and notation}

Let $\F_q$ be a finite field of characteristic different from $2$ and $3$. 
Let $E$ be an elliptic curve defined over $\mathbb{F}_q$ by an equation in short Weierstrass form, i.e. $E$ is the zero-locus of a polynomial of the form $y^2-f(x)$, where $f(x)=x^3+Ax+B$ has no multiple roots and $A,B\in\FF_q$. 
Denote by $+$ the usual addition between points of $E$ and by $P_{\infty}$ the neutral element of $E$. 
For a field extension $\F_q \subseteq {\F}_{q^n}$, denote by $E(\mathbb{\mathbb{F}}_{q^n})$ the group of $\mathbb{F}_{q^n}$-rational points of $E$.  

Consider the Frobenius endomorphism on the group of $\FF_{q^3}$-rational points of $E$:
$$\varphi : E({\F}_{q^3}) \longrightarrow E({\F}_{q^3}), \ \ \ (x,y) \mapsto (x^q,y^q) \mbox{, } P_{\infty} \mapsto P_{\infty}.$$
The Frobenius endomorphism induces the trace endomorphism:
$$\mathrm{Tr} : E(\F_{q^3}) \longrightarrow E(\F_q), \ \ \ P \mapsto P + \varphi(P) + \varphi^{2}(P),$$
whose kernel is the trace zero subgroup $T_3$ of $E(\F_{q^3})$, i.e. 
$$T_3 = \{P \in E(\F_{q^3}) \mbox{ : } P+\varphi(P) +  \varphi^{2}(P)=P_{\infty}\}.$$
Let $P =(x_P,y_P)\in T_3\setminus\{P_\infty\}$ and denote by $h_P$ the equation of the line through $P$, $\varphi(P)$, $\varphi^2(P)$. Then 
\begin{equation}\label{hPmonic}
h_P = y-( \alpha_1x+\alpha_0)
\end{equation}
with $\alpha_1,\alpha_0 \in \mathbb{F}_q$. By~\cite[Corollary 4.2]{EM2}, $h_P$ of the form (\ref{hPmonic}) exists and is unique.
Notice moreover that $$h_{-P}(x,y)=-h_P(x,-y)=y+( \alpha_1x+\alpha_0).$$

Following \cite{EM2}, we represent an element $P\in T_3\setminus\{P_\infty\}$ via the coefficients $(\alpha_0,\alpha_1)$ of $h_P$. Such a representation is optimal in size, since $T_3$ is a variety of dimension $2$ over $\mathbb{F}_q$. Intuitively, optimality means that the number of coordinates is the least possible, see~\cite[Definition~2.7]{EM2} for the formal definition of an optimal representation. In this paper we give an algorithm to compute scalar multiplication in $T_3$ using the representation from \cite{EM2}. Scalar multiplication is the operation needed in most applications, e.g. in the Diffie-Hellman key agreement.

Notice that the representation that we use identifies each point with its Frobenius conjugates. As a consequence, addition in compressed coordinates is not well-defined, that is, $h_P$ and $h_Q$ do not determine $h_{P+Q}$. However, scalar multiplication is well-defined: Given the line $h_P=0$ and an integer $m$, the line 
$h_{mP}=0$ through $mP$ and its Frobenius conjugates is uniquely determined. Observe the analogy with the representation of points of $E$ via their $x$-coordinates: $m$ and the $x$-coordinate of a point $P\in E$ determine the $x$-coordinate of $mP$, however the $x$-coordinates of $P$ and $Q$ do not determine the $x$-coordinate of the point $P+Q$. 

In spite of the fact that one cannot compute $h_{P+Q}$ from $h_P$ and $h_Q$, one can compute the polynomial $S_{P,Q} \in \F_q[x,y]$ such that 
$$\divv(S_{P,Q})=\sum_{0\leq i,j \leq 2}(\varphi^i(P)+\varphi^j(Q)) - 9 P_{\infty}.$$
The polynomial $S_{P,Q}$ is unique up to multiplication by a nonzero constant and it is of the form
$$S_{P,Q} = (S_{P,Q})_1 + y(S_{P,Q})_2 = (a_4x^4+a_3x^3+a_2x^2+a_1x+a_0)+y(b_3x^3+b_2x^2+b_1x+b_0).$$
Notice that, if $P+Q,P+\varphi(Q),P+\varphi^2(Q)\neq P_\infty$, then 
\begin{equation}\label{spq} 
S_{P,Q}=h_{P+Q}h_{P+\varphi(Q)}h_{P+\varphi^2(Q)} \mod{y^2-f(x)}. 
\end{equation}

From $h_P$ and $S_{P,Q}$ one can compute the polynomials $$H_P:=f-(\alpha_1x+\alpha_0)^2, \ \Sigma_{P,Q}:=f(S_{P,Q})_2^2-(S_{P,Q})_1^2 \in\FF_q[x].$$
In the next lemma we collect a few useful facts.

\begin{lemma}\label{remHPQ}
Let $H_P=f-(\alpha_1x+\alpha_0)^2, \Sigma_{P,Q}=f(S_{P,Q})_2^2-(S_{P,Q})_1^2$. The following equalities hold, up to a nonzero constant:
\begin{enumerate}
\item $H_P=h_Ph_{-P} \mod{y^2-f(x)}$,
\item $H_P =(x-x_P)(x-x_P^{q})(x-x_P^{q^2})$,
\item $S_{-P,-Q}(x,y)=S_{P,Q}(x,-y)$,
\item $\Sigma_{P,Q}=S_{P,Q}S_{-P,-Q} \mod{y^2-f(x)}$,
\item $\Sigma_{P,Q}=\prod_{0\leq i,j\leq 2} (x-x_{\varphi^i(P)+\varphi^j(Q)})$.
\end{enumerate}
Moreover, the following are equivalent:
\begin{enumerate}[resume]
\item $(S_{P,Q})_2=0$, 
\item $b_3=0$,
\item $\varphi^i(P)+\varphi^j(Q)=P_\infty$ for some $i,j$,
\item $\divv(S_{P,Q})=(P-\varphi(P))+(\varphi(P)-P)+(P-\varphi^2(P))+(\varphi^2(P)-P)+(\varphi(P)-\varphi^2(P))+(\varphi^2(P)-\varphi(P))-6P_\infty$.
\end{enumerate}
\end{lemma}

\begin{proof}
{\em 1.} and {\em 2.} follow from~\cite[Corollary~4.2]{EM2}.\newline\noindent
{\em 3.} Observe that $\divv(S_{-P,-Q})=\sum_{0\leq i,j \leq 2}(-\varphi^i(P)-\varphi^j(Q)) - 9 P_{\infty}$, hence $$S_{-P,-Q}(x,y)=(S_{P,Q})_1(x)-y(S_{P,Q})_2(x)=S_{P,Q}(x,-y)$$ up to a nonzero constant.\newline\noindent
{\em 4.} By 3. $S_{P,Q}S_{-P,-Q}=(S_{P,Q})_1^2-y^2(S_{P,Q})_2^2=\Sigma_{P,Q}$, up to a nonzero constant.\newline\noindent
{\em 5.} By 4. $$\divv(\Sigma_{P,Q})=\divv(S_{P,Q})+\divv(S_{-P,-Q})=\sum_{0\leq i,j \leq 2}(\varphi^i(P)+\varphi^j(Q))+\sum_{0\leq i,j \leq 2}(-\varphi^i(P)-\varphi^j(Q)) - 18 P_{\infty},$$ hence $\Sigma_{P,Q}=\prod_{0\leq i,j\leq 2} (x-x_{\varphi^i(P)+\varphi^j(Q)})$ up to a nonzero constant.\newline\noindent
{\em 7.} $\Rightarrow$ {\em 8.} If $b_3=0$, then $\deg(\Sigma_{P,Q})\leq 8$, hence one of the sums $\varphi^i(P)+\varphi^j(Q)$ must be $P_\infty$.\newline\noindent
{\em 8.} $\Rightarrow$ {\em 9.} If $\varphi^i(P)+\varphi^j(Q)=P_\infty$ for some $i$ and $j$, then $S_{P,Q}=S_{\varphi^i(P),\varphi^j(Q)}=S_{\varphi^i(P),-\varphi^i(P)}=S_{P,-P}.$
Hence the zeroes of $S_{P,Q}$ on $E$ are $\pm(P-\varphi(P)),\pm(P-\varphi^2(P)),\pm(\varphi(P)-\varphi^2(P))$ and $P_\infty$, the latter with multiplicity six.\newline\noindent
{\em 9.} $\Rightarrow$ {\em 6.} 
Since the zeroes of $S_{P,Q}$ on $E$ are $\pm(P-\varphi(P)),\pm(P-\varphi^2(P)),\pm(\varphi(P)-\varphi^2(P))$ and $P_\infty$ with multiplicity six, then
$S_{P,Q}=(x-x_{P-\varphi(P)})(x-x_{P-\varphi^2(P)})(x-x_{\varphi(P)-\varphi^2(P)})\in\FF_q[x]$. Hence $(S_{P,Q})_2=0$.
\end{proof}

\subsection{Procedures for computing doubling and tripling formulas, and the coefficients of $S_{P,Q}$} 

In this subsection we describe two procedures which allow us to compute doubling and tripling formulas for the equation of a line, and the coefficients of the polynomial $S_{P,Q}$. More precisely:

\begin{itemize}
\item Following Procedure 1, we were able to write explicit formulas for the coefficients of $S_{P,Q}$ in terms of the coefficients of $h_P$ and $h_Q$ (see formulas (1) in the appendix) and for the coefficients of $h_{2P}$ in terms of the coefficients of $h_P$ (see formulas (2) in the appendix).
\item Following Procedure 2, we wrote explicit formulas for the coefficients of $h_{3P}$ in terms of the coefficients of $h_P$ (see formulas (3) in the appendix).
\end{itemize}

Moreover, in Proposition \ref{hPQ} we give a procedure to compute the coefficients of $h_{P+Q}$ in terms of the coefficients of $H_{P+Q}$ and $S_{P,Q}$. We assume that $(S_{P,Q})_2 \not = 0,H_{P+Q}$ and that $H_{P+Q}$ is irreducible over $\F_q[x]$ (i.e., that $P+Q\not\in E[3](\FF_q)$).

\begin{notation}
For Procedures 1 and 2, we let $\varphi^{i-1}(P)=P_i=(x_{P_i},y_{P_i})$, respectively $\varphi^{i-1}(Q))=Q_i=(x_{Q_i},y_{Q_i})$ for $i\in \{1,2,3\}$.
We denote by $e_1,e_2,e_3$ the symmetric polynomials in $x_{P_1},x_{P_2},x_{P_3}$ and by $s_1,s_2,s_3$ the symmetric polynomials in $x_{Q_1},x_{Q_2},x_{Q_3}$.
\end{notation}

%%%%%
%% Algoritmo 1 : Additon and doubling of zero locus. 
%%%%
\hrule
\begin{procedure} 
\em{Procedure to write formulas for the coefficients of $h_{2P}$ in terms of those of $h_P$
and for the coefficients of $S_{P,Q}$ in terms of those of $h_P$ and $h_Q$.\\
\hrule
\textbf{ }\\
%%% Ciclo for tangenti e secanti. 
\begin{footnotesize}1: \end{footnotesize}\textbf{for} $i \in \{1,2,3\}$
\hspace{1cm}\begin{footnotesize}$\triangleright$ $t_{i}=0$ tangent to $E$ in $P_i$, $t_i$ polynomial in the variables $x_{P_i},y_{P_i},x,y$\end{footnotesize}\\
\begin{footnotesize}2: \end{footnotesize}\mbox{ }\mbox{ }$t_{i}(x_{P_i},y_{P_i},x,y) \leftarrow
f'(x_{P_i})x - 2y_{P_i}y + (2y_{P_i}^2-f'(x_{P_i})x_{P_i})$\\
\begin{footnotesize}3: \end{footnotesize}\mbox{ }\mbox{ }\textbf{for} $j \in \{1,2,3\}$
\hspace{0.8cm}\begin{footnotesize}$\triangleright$ $r_{ij}=0$ line through $P_i$ and $Q_j$, $r_{ij}$ polynomial in the variables  $x_{P_i},y_{P_i},x_{Q_j},y_{Q_j},x,y$\end{footnotesize}\\
\begin{footnotesize}4: \end{footnotesize}\mbox{ }\mbox{ }\mbox{ }\mbox{ }$r_{ij}(x_{P_i},x_{Q_j},y_{P_i},y_{Q_j},x,y) \leftarrow
(y_{Q_j}-y_{P_i})x + (x_{P_i}-x_{Q_j})y + ((x_{Q_j}-x_{P_i})y_{P_i}+(y_{P_i}-y_{Q_j})x_{P_i})$\\
\begin{footnotesize}5: \end{footnotesize}\mbox{ }\mbox{ }\textbf{end for}\\
\begin{footnotesize}6: \end{footnotesize}\textbf{end for}\\
%%% Definizione dei prodotti T e R. 
\begin{footnotesize}7: \end{footnotesize}
$T(x_{P_1},x_{P_2},x_{P_{3}},y_{P_1},y_{P_2},y_{P_{3}},x,y)\leftarrow \prod_{i=1}^{3}t_{i}$\\
\begin{footnotesize}8: \end{footnotesize}
$R(x_{P_1},x_{P_2},x_{P_{3}},y_{P_1},y_{P_2},y_{P_{3}},x_{Q_1},x_{Q_2},x_{Q_{3}},y_{Q_1},y_{Q_2},y_{Q_{3}},x,y)\leftarrow \prod_{1\leq i,j \leq 3}r_{i,j}$\\
%%%%% Eliminazione delle variabili y 
\begin{footnotesize}9: \end{footnotesize}\textbf{for} $i \in \{1,2,3\}$\\
\begin{footnotesize}10: \end{footnotesize}\mbox{ }\mbox{ }
replace $y_{P_i}$ with $(\alpha_1x_{P_i}+\alpha_0)$ in $T$ and in $R$\\
\begin{footnotesize}11: \end{footnotesize}\mbox{ }\mbox{ }
replace $y_{Q_i}$ with $(\beta_1x_{Q_i}+\beta_0)$ in $R$\\
\begin{footnotesize}12: \end{footnotesize}\textbf{end for}\\
%%%%% Simmetrizzazione di T e R
\begin{footnotesize}13: \end{footnotesize}\mbox{ }\mbox{ }write 
$T(x_{P_1},x_{P_2}, x_{P_{3}})$, $R(x_{P_1},x_{P_2}, x_{P_{3}})$ as polynomials in $e_1,e_2,e_{3}$\\
\begin{footnotesize}14: \end{footnotesize}\mbox{ }\mbox{ }write $R(x_{Q_1},x_{Q_2}, x_{Q_{3}})$ as a polynomial in $s_1,s_2,s_{3}$\\
%%%% Sostituzione coefficienti in T e R
\begin{footnotesize}15: \end{footnotesize}\mbox{ }\mbox{ }$E_1 \leftarrow \alpha_1^2$, $E_2 \leftarrow A-2\alpha_0\alpha_1$, $E_3 \leftarrow \alpha_0^2-B$\\
\begin{footnotesize}16: \end{footnotesize}\mbox{ }\mbox{ }$S_1 \leftarrow \beta_1^2$, $S_2 \leftarrow A-2\beta_0\beta_1$, $S_3 \leftarrow \beta_0^2-B$\\
\begin{footnotesize}17: \end{footnotesize}\textbf{for} $i \in \{1,2,3\}$\\
\begin{footnotesize}18: \end{footnotesize}\mbox{ }\mbox{ }replace $e_i$ with $E_i$ in $T$, $R$\\
\begin{footnotesize}19: \end{footnotesize}\mbox{ }\mbox{ }replace $s_i$ with $S_i$ in $R$\\
\begin{footnotesize}20: \end{footnotesize}\textbf{end for}\\
%%%%% 
\begin{footnotesize}21: \end{footnotesize}
recover $h_{2P}$ via the equality (up to multiplication by a nonzero constant):
$$h_{2P}= T(x,-y)/(h_{-P}^2) \mod{y^2-f(x)}.$$ 
\begin{footnotesize}22: \end{footnotesize} 
recover $S_{P,Q}$ via the equality (up to multiplication by a nonzero constant): 
$$(S_{P,Q})_1(x)-y(S_{P,Q})_2(x) = R(x,y)/(h_P^3h_Q^3) \mod{y^2-f(x)}.$$
}
\end{procedure}
\hrule
\bigskip

\begin{theorem}\label{thformule}
Procedure 1 is correct.
\end{theorem}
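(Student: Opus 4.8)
The plan is to check that each line of Procedure~1 produces the geometric object named in its comment, and then to justify the two recovery identities in lines~21 and~22 by computing divisors on $E$. First I would verify, using the standard formulas for tangent and secant lines on a short Weierstrass curve, that the polynomial $t_i$ of line~2 cuts out the tangent to $E$ at $P_i$ and that $r_{ij}$ of line~4 cuts out the line through $P_i$ and $Q_j$. From the group law on $E$ (three collinear points sum to $P_\infty$, and a tangent meets its point of tangency with multiplicity two) I would record the divisors $\divv(t_i)=2(P_i)+(-2P_i)-3P_\infty$ and $\divv(r_{ij})=(P_i)+(Q_j)+(-(P_i+Q_j))-3P_\infty$, which are the inputs for the final step.

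Next I would justify the eliminations in lines~10--20. Since $P_1,P_2,P_3$ all lie on the line $h_P=0$, their coordinates satisfy $y_{P_i}=\alpha_1 x_{P_i}+\alpha_0$, which legitimizes the substitution in lines~10--11 (and the analogous one for the $Q_j$). After this substitution $T=\prod_i t_i$ is symmetric in $x_{P_1},x_{P_2},x_{P_3}$, and $R=\prod_{i,j}r_{ij}$ is symmetric separately in $x_{P_1},x_{P_2},x_{P_3}$ and in $x_{Q_1},x_{Q_2},x_{Q_3}$, simply because permuting the indices permutes the factors of the product. By the fundamental theorem of symmetric polynomials they can therefore be rewritten in terms of $e_1,e_2,e_3$ and $s_1,s_2,s_3$ (lines~13--14). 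By Lemma~\ref{remHPQ}.2 the $x_{P_i}$ are exactly the roots of $H_P=x^3-\alpha_1^2x^2+(A-2\alpha_0\alpha_1)x-(\alpha_0^2-B)$, so $e_1=\alpha_1^2=E_1$, $e_2=A-2\alpha_0\alpha_1=E_2$, $e_3=\alpha_0^2-B=E_3$, and likewise $s_j=S_j$; this validates lines~15--20 and shows that afterwards $T$ and $R$ are polynomials in $x,y$ and the coefficients of $h_P$ and $h_Q$ only.

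The core of the argument is the divisor computation behind lines~21--22. Replacing $y$ by $-y$ turns the tangent at $P_i$ into the tangent at $-P_i$, so $T(x,-y)$ has divisor $\sum_{i=1}^3\big(2(-P_i)+(2P_i)\big)-9P_\infty$, while $h_{-P}^2$ has divisor $2\sum_{i=1}^3(-P_i)-6P_\infty$; subtracting gives $\divv\!\big(T(x,-y)/h_{-P}^2\big)=\sum_{i=1}^3(2P_i)-3P_\infty=\divv(h_{2P})$. Since a rational function on $E$ is determined by its divisor up to a nonzero constant, this proves line~21. Similarly $\divv(R)=3\sum_i(P_i)+3\sum_j(Q_j)+\sum_{i,j}(-(P_i+Q_j))-27P_\infty$ and $\divv(h_P^3h_Q^3)=3\sum_i(P_i)+3\sum_j(Q_j)-18P_\infty$, so $\divv\!\big(R/(h_P^3h_Q^3)\big)=\sum_{i,j}(-(P_i+Q_j))-9P_\infty=\divv(S_{-P,-Q})$; combined with Lemma~\ref{remHPQ}.3, which identifies $S_{-P,-Q}$ with $(S_{P,Q})_1-y(S_{P,Q})_2$, this proves line~22. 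Reading off the coefficients, after normalizing $h_{2P}$ to the monic form~(\ref{hPmonic}), then yields the stated formulas.

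The main obstacle is making the last paragraph fully rigorous: one must check that the two quotients are genuinely regular functions modulo $y^2-f(x)$, i.e.\ that the divisor of each numerator dominates that of the corresponding denominator, so that the division takes place inside the coordinate ring $\FF_q[x,y]/(y^2-f(x))$ and not merely in its fraction field. One should also address the degenerate configurations excluded by the comments (for instance $P_i$ of order two, or $P_i=\pm Q_j$, where a tangent or secant degenerates). Since the whole procedure produces polynomial identities in $\alpha_0,\alpha_1,\beta_0,\beta_1$, it suffices to establish them on the dense open locus where all the relevant points are distinct and the tangents and secants are non-degenerate, after which they hold as formal identities wherever the output is defined.
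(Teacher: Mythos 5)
Your proposal is correct and follows essentially the same route as the paper: identify the divisors of the $t_i$ and $r_{ij}$, use the symmetry of $T$ and $R$ together with the identification $e_i=E_i$, $s_i=S_i$ coming from Lemma~\ref{remHPQ}.2, and then match divisors to justify the quotients in lines~21--22 (the paper computes $\divv(T)=\divv(h_P^2h_{-2P})$ and then substitutes $y\mapsto -y$, whereas you substitute first and divide, which is the same computation). Your closing remark about reducing to the dense open locus of non-degenerate configurations is exactly how the paper handles the special cases $h_P=y$ and $h_Q=h_{\pm P}$.
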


\begin{proof}
%% Algorithm 1 general case
We first prove that the formulas of Procedure 1 are correct when $h_P\not = y$ and $h_Q \not =  h_{\pm P}$. 
We regard $x_{P_1},x_{P_2},x_{P_3},x_{Q_1},x_{Q_2},x_{Q_3}$ as variables.
%% lines 1-8
Since $h_P \not = y$, one has that $2P_i\not = P_{\infty}$ for $i \in \{1,2,3\}$, so $t_i(x_{P_i},y_{P_i},x,y)=0$ the equation defining the tangent to $E$ at $P_i$ is of the form 
given in line $2$ and $\divv(t_i) = P_i+P_i+(-2P_i) - 3P_{\infty}$.
Since $h_{Q} \not =  h_{\pm P}$, one has that $P_i \pm Q_j \not = P_{\infty}$ for $i, j\in \{1,2,3\}$. 
Then $r_{ij}(x_{P_i},x_{Q_j},y_{P_i},y_{Q_j},x,y)=0$, the equation of the line through $P_i$ and $Q_j$, 
is of the form given in line $4$ and $\divv(r_{ij})=P_i + Q_j + (-(P_i+Q_j)) - 3P_{\infty}.$
Let $T$ and  $R$ be as in lines $7$ and $8$ respectively. For $i \in \{1,2,3\}$, one has that 
$y_{P_i} =\alpha_1x_{P_i} + \alpha_0$ and $y_{Q_i}=\beta_1x_{Q_i} + \beta_0$ whence the correctness of lines $9-12$. 
%% lines 13-14
Moreover, $T$, $R$ are symmetric polynomials in the variables $x_{P_1},x_{P_2},x_{P_{3}}$, and
$R$ is a symmetric polynomial in the variables $x_{Q_1},x_{Q_2},x_{Q_{3}}$. Hence they can be written as polynomial functions of $e_1,e_2,e_3$ and $s_1,s_2,s_3$. 
%% lines 15-20
Correctness of lines $15$-$20$ follows from Lemma~\ref{remHPQ}.
%% line 21
Correctness of line $21$ follows from observing that
$$\divv(T) = \sum_{i=1}^{3}P_i + \sum_{i=1}^{3}P_i + \sum_{i=1}^{3}(-2P_i) -9P_{\infty} = 
%(\sum_{i=1}^{3}P_i - 3P_{\infty}) + (\sum_{i=1}^{3}P_i - 3P_{\infty}) + (\sum_{i=1}^{3}(-2P_i) - 3P_{\infty}) =
2\divv(h_P) + \divv(h_{-2P})=\divv(h_P^2\cdot h_{-2P}),$$
hence $T = h_P^2\cdot h_{-2P} \mod{y^2-f(x)}$ up to multiplication by a nonzero constant.
%% line 22
Finally
$$\divv(R)= 3\sum_{i=1}^{3}P_i + 3\sum_{j=1}^{3}Q_j + \sum_{1\leq i,j \leq 3}(-(P_i+Q_j)) -27P_{\infty} =
%3(\sum_{i=1}^{3}P_i-3P_{\infty}) + 3(\sum_{j=1}^{3}Q_j-3P_{\infty}) + \sum_{1\leq i,j \leq 3}(+(-(P_i+Q_j)) - 9P_{\infty}) = 
%3\divv(h_P)+3\divv(h_Q) + \divv(S_{P,Q}(x,-y))= 
\divv(h_P^{3}h_Q^{3}S_{P,Q}(x,-y)),$$ 
hence $R=h_P^{3}h_Q^{3}S_{P,Q}(x,-y) \mod{y^2-f(x)}$ up to multiplication by a nonzero constant, hence correctness of line $22$ follows. 
To conclude, one can directly check that the formulas computed in this way hold also in the case when $h_P=y$ or $h_Q=h_{\pm P}$. 
\end{proof}

%%%%%
%% Algoritmo: tripling of trace zero locus
%%%%%
\hrule
\begin{procedure} 
\em{
Procedure to write formulas for the coefficients of $h_{3P}$ in terms of those of $h_P$. \\
\hrule
\textbf{ }\\
%%%
%% Formule duplicazipne per Pi e Li retta per Pi, 2Pi
\begin{footnotesize}1: \end{footnotesize}\textbf{for} $i \in \{1,2,3\}$
\hspace{2cm}\begin{footnotesize}$\triangleright$ doubling formulas for $P_i$ and $\ell_{i}=0$ line through $P_i$, $2P_i$  \end{footnotesize}\\
\mbox{ }\mbox{ }\mbox{ }\mbox{ }\begin{footnotesize}
$\triangleright$ $x_{2P_i}$ written as a rational function in the variables $x_{P_i}, y_{P_i}$ \end{footnotesize} \\
\begin{footnotesize}2: \end{footnotesize}\mbox{ }\mbox{ }$x_{2P_i}(x_{P_i},y_{P_i}) \leftarrow (f'(x_{P_i})/2y_{P_i})^2-2x_{P_i}$\\
\mbox{ }\mbox{ }\mbox{ }\mbox{ }\begin{footnotesize}
$\triangleright$ $y_{2P_i}$ written as a rational function in the variables $x_{P_i}, y_{P_i}$ \end{footnotesize} \\
\begin{footnotesize}3: \end{footnotesize}\mbox{ }\mbox{ }$y_{2P_i}(x_{P_i},y_{P_i}) \leftarrow (f'(x_{P_i})/2y_{P_i})(x_{P_i}-x_{2P_i})-y_{P_i}$\\
\mbox{ }\mbox{ }\mbox{ }\mbox{ }\begin{footnotesize}
$\triangleright$ $\ell_i$ written as a rational function in the variables $x_{P_i},y_{P_i},x,y$ \end{footnotesize} \\
\begin{footnotesize}4: \end{footnotesize}\mbox{ }\mbox{ }
$\ell_i(x_{P_i},y_{P_i},x,y) \leftarrow (y_{2P_i}-y_{P_i})x + (x_{P_i}-x_{2P_i})y + ((x_{2P_i}-x_{P_i})y_{P_i}+(y_{P_i}-y_{2P_i})x_{P_i})$\\
\begin{footnotesize}5: \end{footnotesize}\textbf{end for}\\
%% Definizione del prodotto di rette L
\begin{footnotesize}6: \end{footnotesize}
$L(x_{P_1},x_{P_2},x_{P_{3}},y_{P_1},y_{P_2},y_{P_{3}},x,y)\leftarrow \prod_{i=1}^{3}\ell_{i}$\\
%%% Sostituzione y_Pi
\begin{footnotesize}7: \end{footnotesize}\textbf{for} $i \in \{1,2,3\}$\\
\begin{footnotesize}8: \end{footnotesize}\mbox{ }\mbox{ }
replace $y_{P_i}$ with $(\alpha_1x_{P_i}+\alpha_0)$ in $L$\\
\begin{footnotesize}9: \end{footnotesize}\textbf{end for}\\
%%% Simmetrizzazione di L.
\begin{footnotesize}10: \end{footnotesize}write 
$L(x_{P_1},x_{P_2}, x_{P_{3}})$ via the elementary symmetric polynomials $e_1,e_2,e_{3}$\\
%%% Sostituzione in L dei coefficienti di hP
\begin{footnotesize}11: \end{footnotesize}\mbox{ }\mbox{ }$E_1 \leftarrow \alpha_1^2$, $E_2 \leftarrow A-2\alpha_0\alpha_1$, $E_3 \leftarrow \alpha_0^2-B$\\
\begin{footnotesize}12: \end{footnotesize}\textbf{for} $i \in \{1,2,3\}$\\
\begin{footnotesize}13: \end{footnotesize}\mbox{ }\mbox{ }replace $e_i$ with $E_i$ in $L$\\
\begin{footnotesize}14: \end{footnotesize}\textbf{end for}\\
%%% Calcolo h3P
\begin{footnotesize}15: \end{footnotesize} Recover $h_{3P}$ using the formulas for $h_{2P}$ found with Procedure 1, together with the\\
\begin{footnotesize}\mbox{ }\mbox{ }\mbox{ }\mbox{ }\mbox{ } \end{footnotesize}equality (up to multiplication by a nonzero constant):
$$(h_{3P})= L(x,-y)/(h_{-P}h_{-2P}) \mod{y^2-f(x)}.$$
}
\end{procedure}
\hrule 
\bigskip

\begin{theorem}\label{corrproc2}
Procedure 2 is correct.
\end{theorem}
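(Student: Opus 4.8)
The plan is to follow essentially the same divisor-theoretic strategy that proved Theorem~\ref{thformule} for Procedure~1, transported to the setting of tripling. As in the previous proof, I would first treat the generic case, regarding $x_{P_1},x_{P_2},x_{P_3}$ as independent variables and assuming that no degeneracy occurs (here the relevant degeneracies are $2P_i=P_\infty$ and $3P_i=P_\infty$, i.e. $h_P\neq y$ and $P_i\notin E[3]$), and then check at the end that the resulting formulas remain valid on the excluded locus by direct substitution. Under the generic hypothesis, lines $2$ and $3$ are just the standard doubling formulas on $E$, so $2P_i=(x_{2P_i},y_{2P_i})$ with the displayed rational expressions; this is classical and needs no argument beyond citing the addition law on a short Weierstrass curve.

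The heart of the argument is the divisor computation for $L=\prod_{i=1}^3\ell_i$. First I would observe that $\ell_i=0$ is the secant line through $P_i$ and $2P_i$, so that
\[
\divv(\ell_i)=P_i+2P_i+(-3P_i)-3P_\infty,
\]
using that the third intersection of this line with $E$ is $-(P_i+2P_i)=-3P_i$. Taking the product over $i$ gives
\[
\divv(L)=\sum_{i=1}^3 P_i+\sum_{i=1}^3 2P_i+\sum_{i=1}^3(-3P_i)-9P_\infty
=\divv(h_P)+\divv(h_{2P})+\divv(h_{-3P}),
\]
where I have used that $\{P_i\}=\{P,\varphi(P),\varphi^2(P)\}$ realizes $h_P$, that $\{2P_i\}=\{2P,\varphi(2P),\varphi^2(2P)\}$ realizes $h_{2P}$ (since $\varphi$ commutes with doubling and $2P\in T_3$), and that $\{-3P_i\}$ realizes $h_{-3P}$. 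Consequently $L$ and $h_P\,h_{2P}\,h_{-3P}$ have the same divisor on $E$, so they agree up to a nonzero constant modulo $y^2-f(x)$. Replacing $L$ by $L(x,-y)$ corresponds to composing with $(x,y)\mapsto(x,-y)$ and hence negates every point in these divisors, turning $h_{-P},h_{-2P},h_{3P}$ into the relevant factors; this yields the equality of line $15$, namely $h_{3P}=L(x,-y)/(h_{-P}h_{-2P})\bmod{y^2-f(x)}$ up to a nonzero constant, and the division is exact precisely because the numerator's divisor dominates that of the denominator.

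The symmetrization and substitution steps (lines $6$--$14$) are justified exactly as in Theorem~\ref{thformule}: after replacing $y_{P_i}$ by $\alpha_1 x_{P_i}+\alpha_0$, the polynomial $L$ becomes symmetric in $x_{P_1},x_{P_2},x_{P_3}$ and hence a polynomial in the elementary symmetric functions $e_1,e_2,e_3$, and the substitutions $E_1=\alpha_1^2$, $E_2=A-2\alpha_0\alpha_1$, $E_3=\alpha_0^2-B$ are correct by Lemma~\ref{remHPQ}, item~$2$, which identifies $H_P=(x-x_{P_1})(x-x_{P_2})(x-x_{P_3})$ and thereby expresses $e_1,e_2,e_3$ through $\alpha_0,\alpha_1,A,B$. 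The only genuinely new ingredient relative to Procedure~1 is that the entries of $L$ involve the doubling formulas, so $\ell_i$ is a \emph{rational} function in $x_{P_i},y_{P_i}$ rather than a polynomial; I would therefore be careful that the symmetrization is performed after clearing the denominators $2y_{P_i}$ coming from the slope, and that these denominators are accounted for in the final normalization. I expect the main obstacle to be precisely this bookkeeping of denominators together with verifying that the formulas extend to the degenerate cases $h_P=y$ and $P_i\in E[3]$; as in the previous theorem, this last check is a finite direct computation rather than a conceptual difficulty, and it is exactly the point at which the irreducibility/non-torsion hypotheses on $H_{P}$ are used to guarantee that the generic formulas specialize correctly.
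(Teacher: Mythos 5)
Your argument is correct and is precisely the proof the paper omits: the paper states only that the proof is ``analogous to the proof of correctness for Procedure 1,'' and your divisor computation $\divv(L)=\divv(h_P)+\divv(h_{2P})+\divv(h_{-3P})$, the sign flip via $(x,y)\mapsto(x,-y)$, the symmetrization through $e_1,e_2,e_3$ justified by Lemma~\ref{remHPQ}.2, and the final direct check of the degenerate cases are exactly that analogous argument. Your added caution about clearing the denominators $2y_{P_i}$ from the doubling slopes before symmetrizing is a genuine (if routine) bookkeeping point that the analogy with Procedure~1 glosses over, and it is handled correctly here.
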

We omit the proof of Theorem~\ref{corrproc2}, since it is analogous to the proof of correctness for Procedure 1. 

We now want to compute $h_{P+Q}$ from $H_{P+Q}$ and $S_{P,Q}$. A straightforward way of doing this is computing the coefficients of $h_{P+Q}$ from those of $H_{P+Q}$ up to sign via the relations $w_2=-\gamma_1^2$, $w_1=A-2\gamma_0\gamma_1$, $w_0=B-\gamma_0^2$. One can then distinguish $h_{P+Q}=y-(\gamma_0+\gamma_1x)$ and $h_{-P-Q}=y+(\gamma_0+\gamma_1x)$, since $H_{P+Q}\mid (S_{P,Q})_1+(\gamma_0+\gamma_1x)(S_{P,Q})_2$. This however requires extracting a square root. 
The next proposition allows us to compute $h_{P+Q}$ from $H_{P+Q}$ and $S_{P,Q}$ more efficiently, by solving a simple linear system. 

\begin{proposition}\label{hPQ}
Suppose that $P+Q \not\in E[3](\FF_q)$, that $Q$ is not a Frobenius conjugate of $-P$ or $-2P$, and that $P$ is not a Frobenius conjugate of $-2Q$. 
Write $H_{P+Q}=x^3+w_2x^2+w_1x+w_0$ and $h_{P+Q} = y-(\gamma_1x+\gamma_0)$
with $\gamma_1,\gamma_0,w_2,w_1,w_0 \in \FF_q$.
Then $(\gamma_1,\gamma_0)$ is the unique solution of the linear system whose augmented matrix is
$$L(H_{P+Q},S_{P,Q}) = \left( \begin{matrix} w_0(w_2-b_2) & (b_0-w_0) & w_0a_3-a_4w_2w_0-a_0 \\ 
w_0(w_1-b_1) & (b_0w_2-w_0b_2) &   w_0a_2-a_4w_1w_0-a_0w_2 \\
w_0(w_0-b_0) & (b_0w_1-b_1w_0) & w_0a_1-a_4w_0^2-a_0w_1 
\end{matrix} \right).$$ 
\end{proposition}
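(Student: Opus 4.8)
The plan is to show that the true coefficients $(\gamma_1,\gamma_0)$ of $h_{P+Q}$ are the unique solution of the displayed system, proving existence and uniqueness separately. The key input is a divisibility relation. By \eqref{spq} the points $P+Q,\varphi(P+Q),\varphi^2(P+Q)$ are among the zeros $\varphi^i(P)+\varphi^j(Q)$ of $S_{P,Q}$, and all three lie on the line $h_{P+Q}=y-(\gamma_1x+\gamma_0)$; substituting $y=\gamma_1x+\gamma_0$ into $S_{P,Q}$ therefore yields the polynomial
$$N(x):=(S_{P,Q})_1(x)+(\gamma_1x+\gamma_0)(S_{P,Q})_2(x)\in\FF_q[x],$$
which vanishes at the $x$-coordinates $x_{P+Q},x_{P+Q}^{q},x_{P+Q}^{q^2}$, that is, at the three roots of $H_{P+Q}$ (by Lemma~\ref{remHPQ}). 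Since $P+Q\notin E[3](\FF_q)$, the cubic $H_{P+Q}$ is irreducible over $\FF_q$, hence is the minimal polynomial of $x_{P+Q}$; as $N(x_{P+Q})=0$, we conclude $H_{P+Q}\mid N$.

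From this divisibility I would read off the system. With $S_{P,Q}$ normalized so that \eqref{spq} is an equality, the factor $(S_{P,Q})_2$ is the $y$-coefficient of $h_{P+Q}h_{P+\varphi(Q)}h_{P+\varphi^2(Q)}\bmod(y^2-f)$, which is monic of degree $3$; thus $b_3=1$, $\deg N\le4$, and $N=H_{P+Q}\cdot(c_1x+c_0)$ for suitable $c_1,c_0\in\FF_q$. Comparing coefficients of $x^4,\dots,x^0$ gives five equations in $\gamma_1,\gamma_0,c_1,c_0$; the top one gives $c_1=a_4+\gamma_1$ and the bottom one $c_0=(a_0+\gamma_0b_0)/w_0$, where $w_0=H_{P+Q}(0)\ne0$ because the irreducible $H_{P+Q}$ has no root in $\FF_q$. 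Substituting these into the equations for $x^3,x^2,x^1$ and clearing $w_0$ produces, after collecting the coefficients of $\gamma_1$ and $\gamma_0$, exactly the three rows of $L(H_{P+Q},S_{P,Q})$. This is a mechanical elimination, so I would simply carry it out coefficient by coefficient; it shows that $(\gamma_1,\gamma_0)$ solves the system.

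The substance is in the uniqueness, i.e. in showing the $3\times2$ coefficient matrix has rank $2$. Reversing the elimination shows that any solution $(\gamma_1,\gamma_0)$ of the system satisfies $N=H_{P+Q}(c_1x+c_0)$ with $c_1=a_4+\gamma_1$ and $c_0=(a_0+\gamma_0b_0)/w_0$. Thus if $(\gamma_1,\gamma_0)$ and $(\gamma_1',\gamma_0')$ were two solutions, with associated quotients $(c_1,c_0)$ and $(c_1',c_0')$, subtracting the corresponding relations gives
$$\bigl((\gamma_1-\gamma_1')x+(\gamma_0-\gamma_0')\bigr)(S_{P,Q})_2=H_{P+Q}\bigl((c_1-c_1')x+(c_0-c_0')\bigr).$$
Because $H_{P+Q}$ is irreducible of degree $3$, a nonzero linear left factor would force $H_{P+Q}\mid(S_{P,Q})_2$, and since both are monic of degree $3$ this means $(S_{P,Q})_2=H_{P+Q}$. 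Hence uniqueness is equivalent to $(S_{P,Q})_2\ne H_{P+Q}$, i.e. to $\gcd(H_{P+Q},(S_{P,Q})_2)=1$. To verify this under the hypotheses I would evaluate $(S_{P,Q})_2$ at the root $x_{P+Q}$: writing $v,s$ for the linear forms of $h_{P+\varphi(Q)},h_{P+\varphi^2(Q)}$ and using $y_{P+Q}=\gamma_1x_{P+Q}+\gamma_0$ and $f(x_{P+Q})=y_{P+Q}^2$, the $y$-coefficient collapses to
$$(S_{P,Q})_2(x_{P+Q})=\bigl(y_{P+Q}+v(x_{P+Q})\bigr)\bigl(y_{P+Q}+s(x_{P+Q})\bigr),$$
which vanishes exactly when $-(P+Q)$ lies on $h_{P+\varphi(Q)}$ or on $h_{P+\varphi^2(Q)}$. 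Running through the incidences $-(P+Q)=\varphi^i(P)+\varphi^j(Q)$ and simplifying with the trace-zero relations $P+\varphi(P)+\varphi^2(P)=P_{\infty}$ and $Q+\varphi(Q)+\varphi^2(Q)=P_{\infty}$, each possibility forces $Q$ to be a Frobenius conjugate of $-P$ or of $2P$, or $P$ to be a Frobenius conjugate of $2Q$ --- precisely the degenerate configurations excluded by the hypotheses. Therefore $(S_{P,Q})_2\ne H_{P+Q}$ and the solution is unique.

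I expect the uniqueness step to be the main obstacle: both the reduction to the coprimality $\gcd(H_{P+Q},(S_{P,Q})_2)=1$ and the translation of $(S_{P,Q})_2=H_{P+Q}$ back into the list of excluded point configurations require careful Frobenius and trace-zero bookkeeping, and it is there that all the hypotheses get used (the irreducibility of $H_{P+Q}$, together with the exclusion of the doubling configurations). By contrast the coefficient-matching that yields the matrix is routine, the only nontrivial point being $w_0\ne0$, which follows from irreducibility.
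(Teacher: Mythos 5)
Your proof is correct and rests on the same pivot as the paper's: the divisibility $H_{P+Q}\mid (S_{P,Q})_1+(\gamma_1x+\gamma_0)(S_{P,Q})_2$ gives existence, and uniqueness reduces to showing $H_{P+Q}\nmid (S_{P,Q})_2$, with the same case analysis of the incidences $-(P+Q)=\varphi^i(P)+\varphi^j(Q)$, $i\neq j$. The execution differs in two places. For uniqueness the paper takes an arbitrary solution $(t_1,t_0)$, evaluates the divisibility at a root $x_0$ of $H_{P+Q}$ to deduce $y_0=t_1x_0+t_0$ (once $(S_{P,Q})_2(x_0)\neq 0$ is known), and concludes geometrically: the line $y=t_1x+t_0$ passes through the three distinct Frobenius conjugates of $P+Q$, hence coincides with $h_{P+Q}$. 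You instead subtract two putative solutions and use irreducibility plus degree counting to force $(S_{P,Q})_2=H_{P+Q}$; this is a purely polynomial route to the same dichotomy. For the key non-vanishing claim, the paper argues with divisors ($0\leq \divv(S_{P,Q})-\divv(H_{P+Q})$), whereas you evaluate $(S_{P,Q})_2$ at $x_{P+Q}$ via the factorization $S_{P,Q}\equiv h_{P+Q}h_{P+\varphi(Q)}h_{P+\varphi^2(Q)}$ and obtain the clean expression $(y_{P+Q}+v)(y_{P+Q}+s)$; both land on the same list of degenerate configurations. Your explicit coefficient matching (which, as you note, needs the normalization $b_3=1$ --- legitimate here since $b_3\neq 0$ by Lemma~\ref{remHPQ} and the hypothesis that $Q$ is not a conjugate of $-P$) fills in what the paper dismisses as ``a simple calculation.'' One wrinkle: your case analysis correctly produces the conditions ``$Q$ is a Frobenius conjugate of $2P$'' and ``$P$ is a Frobenius conjugate of $2Q$,'' not of $-2P$ and $-2Q$ as the hypothesis literally states; the paper's own proof asserts the $-2P$ version without showing the computation and appears to carry the same sign slip. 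Your derivation is the accurate one, so you should flag the discrepancy rather than claim your conditions are ``precisely'' those excluded by the stated hypotheses.
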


\begin{proof} 
Using the fact that $H_{P+Q} | (S_{P,Q})_1+(\gamma_1x+\gamma_0)(S_{P,Q})_2$, 
a simple calculation shows that $(\gamma_1,\gamma_0)$ is a solution of the linear system with augmented matrix $L(H_{P+Q},S_{P,Q})$. 
Let us prove that the solution is unique. Let $(t_1,t_0)$ be a solution of the linear system with augmented matrix $L(H_{P+Q},S_{P,Q})$ and let $(x_0,y_0)\in T_3$ 
be one of the Frobenius conjugates of $P+Q$. Notice that, since $P+Q\not\in E[3](\FF_q)$, the three Frobenius conjugates are distinct. By construction,
$(S_{P,Q})_1(x_0)+(t_1x_0+t_0)(S_{P,Q})_2(x_0)=0$. We claim that $(S_{P,Q})_2(x_0)\not = 0$. In fact, if $(S_{P,Q})_2(x_0)=0$, then $(S_{P,Q})_2=H_{P+Q}$ and $H_{P+Q}\mid (S_{P,Q})_1$. In particular, $$0\leq \divv(S_{P,Q})-\divv(H_{P+Q})=\sum_{0\leq i,j\leq 2\\ i\neq j}\varphi^i(P)+\varphi^j(Q)-\sum_{i=0}^2\varphi^i(-P-Q),$$
hence $-P-Q=\varphi^i(P)+\varphi^j(Q)$ for some $i,j$ distinct. If $i,j\neq 0$, then $-\varphi^k(P)=P+\varphi^i(P)=-Q-\varphi^j(Q)=\varphi^h(Q)$ for some $h,k$, hence $P$ and $-Q$ are Frobenius conjugates. Similarly, $Q$ and $-2P$ are Frobenius conjugates if $i=0$ and $j\neq 0$, and $P$ and $-2Q$ are Frobenius conjugates if $i=0$ and $j\neq 0$. This concludes the proof of the claim. Since  $(S_{P,Q})_2(x_0)\not = 0$, then $y_0=t_1x_0+t_0$. 
Hence the line of equation $y-(t_1x+t_0)$ has three points in common with the line of equation $h_{P+Q}$. This implies that $t_1=\gamma_1$ and $t_0=\gamma_0$. 
\end{proof}

\begin{example}\label{exPQ}
Let $q = 1021$ and $\mathbb{F}_{q^3}=\mathbb{F}_q[\zeta]/(\zeta^3-5)$. Let $E$ be the elliptic curve over $\mathbb{F}_q$ of equation $y^2=x^3+230x+191$.
Let $P= (782\zeta^2 + 802\zeta + 45,979\zeta^2 + 299\zeta + 133)$, $Q=(466\zeta^2 + 528\zeta + 514 , 742\zeta^2 + 1016\zeta + 704) \in T_3$, 
with $h_P=y-(987x+642)$, $h_Q=y-(729x+705)$. 
Using the formulas in the appendix, we can compute:
$$h_{2P}=y-(1000x+280), \ \ h_{3P} = y-(646x+693),$$
$$S_{P,Q}= (823x^4 +  948x^3 + 709x^2 + 530x+ 741) + y(x^3 + + 782x^2 + 636x  + 100 ).$$
The matrix from Proposition \ref{hPQ} is:
$$L(H_{P+Q},S_{P,Q}) = \left( \begin{matrix} 
809 & 123 & 843 \\ 
568 & 823 & 755 \\
787 &  382 &  388
\end{matrix} \right).$$
Before we compute $L$, we compute $H_{P+Q}=x^3+880x^2+123x+998$ (in the next section we discuss how to compute $H_{P+Q}$). 
Solving the system associated to $L$ we find $h_{P+Q}=y-(65x+260)$.
\end{example}

\section{Scalar multiplication in $T_3$ using compressed coordinates}\label{main}

Throughout this section we assume that $T_3=\langle P \rangle$ is cyclic of order $p$, where $p$ is a prime of cryptographic size. 
Hence $\varphi(P)=sP$, with $s=(q-1)/(2+q-|E(\F_q)|) \mod{p}$, (see \cite[Section~15.3.1]{handbook}).
Let $m$ be an integer modulo $p$. In this section we develop an efficient algorithm to compute $h_{mP}$ given $m$ and $h_P$. 
In order to do this, in Subsection~\ref{subalg} we give a subalgorithm that we use within the main algorithm, as well as a lemma which helps us deal with special cases.
In Subsection~\ref{montgomery} we present a Montgomery-ladder-style algorithm that computes $h_{mP}$ from $m$ and $h_P$. 
Finally, in Subsection~\ref{combined} we apply the usual Frobenius endomorphism strategy to speed up our algorithm from Section~\ref{montgomery}. 
This gives our main algorithm to compute scalar multiplication in $T_3$ using compressed coordinates.

\subsection{Subalgorithm and special cases}\label{subalg}

Throughout this subsection $m$ is an integer $0<m<p$. Because of the doubling formulas in the Appendix, we may assume that $m$ is odd.

\begin{notation}\label{notret}
Let $m_1,m_2,n_1,n_2$ be integers such that $m_1+m_2=n_1+n_2=m$. For $i \in \{0,1,2\}$, 
let $h_i = h_{m_1P+\varphi^i(m_2P)}$, 
$H_i = H_{m_1P+\varphi^i(m_2P)}$,
$k_i=h_{n_1P+\varphi^i(n_2P)}$,
$K_i = K_{n_1P+\varphi^i(n_2P)}$. 
\end{notation}

%Notice that $h_0=k_0=h_{mP}$, $H_0=K_0=H_{mP}$. 

Let $m_1,m_2,n_1,n_2$ be positive integers such that $m_1+m_2=n_1+n_2=m$ and suppose that we are given $h_{{m_1}P}, h_{{m_2}P},h_{{n_1}P},h_{{n_2}P}$. 
The subalgorithm computes $h_{mP}$ by applying the following strategy: Via the formulas found with Procedure 1, one can compute $$S_1:=S_{m_1P,m_2P}=S_{1,1}+yS_{1,2}$$ from $h_{m_1 P}, h_{m_2 P}$ and $$S_2:=S_{n_1 P,n_2 P}=S_{2,1}+yS_{2,2}$$ from $h_{n_1P}, h_{n_2P}$. 
Up to multiplying by a nonzero constant, $S_1=\prod_{i=0}^2h_i \mod{y^2-f(x)}$ and $S_2=\prod_{i=0}^2k_i \mod{y^2-f(x)}$, hence $S_1,S_2$ share the factor $h_0=k_0=h_{mP}$. 
By Lemma~\ref{remHPQ} $$H_{mP} | G:= \gcd{(fS_{1,2}^2-S_{1,1}^2,fS_{2,2}^2-S_{1,2}^2)}.$$ Moreover, if $m_1P+\varphi(m_2P)$ and $m_1P+\varphi(m_2P)$ are not Frobenius conjugates of $\pm(n_1P+\varphi(n_2P))$ or $\pm (n_1P+\varphi^2(n_2P))$, that is if $h_1,h_2\not\in\{k_1(x,y), k_2(x,y),-k_1(x,-y), -k_2(x,-y)\}$,
then $G=H_{mP}$. In this case, one can compute $h_{mP}$ from $G$ and $S_1$ (or from $G$ and $S_2$) by solving the linear system of Proposition \ref{hPQ}, provided that the assumptions of the proposition are satisfied. 

We now give the subalgorithm and we prove its correctness.\\
\medskip
\hrule
\begin{subalgorithm} \textbf{ }\\
{
\hrule
\textbf{ }\\
\textbf{Input}: The polynomials $h_{m_1P},h_{m_2P},h_{n_1P},h_{n_2P}$, such that $h_1, h_2 \not \in \{k_1,k_2\}$.\\
\textbf{Output} : $h_{mP}= y-(\gamma_1x+\gamma_0)$.\\
\hrule
\textbf{ }\\
%%%
%\textbf{if} case (a) \textbf{then} \\
\mbox{ }\mbox{ }\begin{footnotesize}1: \end{footnotesize}\textbf{if} $h_{m_1P}=h_{m_2P}$ \textbf{then return} $h_{-m_1P}$ \textbf{endif}\\
\mbox{ }\mbox{ }\begin{footnotesize}2: \end{footnotesize}\textbf{if} $h_{n_1P}=h_{n_2P}$ \textbf{then return} $h_{-n_1P}$ \textbf{endif}\\
%\newline
\mbox{ }\mbox{ }\begin{footnotesize}3: \end{footnotesize}compute $S_1=S_{m_1P, m_2P}$ from $h_{m_1P}$, $h_{m_2P}$
\hspace{0.6cm}\begin{footnotesize}$\triangleright$ formulas $(1)$ in the appendix \end{footnotesize}\\
\mbox{ }\mbox{ }\begin{footnotesize}4: \end{footnotesize}compute $S_2=S_{n_1P,n_2P}$ from $h_{n_1P}$, $h_{n_2P}$\\
%\newline
\mbox{ }\mbox{ }\begin{footnotesize}5: \end{footnotesize}\textbf{if} $h_{m_1P}(x,y)=-h_{m_2P}(x,-y)$ \textbf{then} \\
\mbox{ }\mbox{ }\begin{footnotesize}6: \end{footnotesize}\mbox{ }\mbox{ }
$W\leftarrow \monic(S_1)$\\
\mbox{ }\mbox{ }\begin{footnotesize}7: \end{footnotesize}\mbox{ }\mbox{ }
$L \leftarrow L(W,S_2)$ \hspace{5.2cm}\begin{footnotesize}$\triangleright$ see Proposition \ref{hPQ} \end{footnotesize}\\
\mbox{ }\mbox{ }\begin{footnotesize}8: \end{footnotesize}\mbox{ }\mbox{ }
compute $h=y-(\gamma_1x+\gamma_0)$ by solving the linear system associated to $L$\\
\mbox{ }\mbox{ }\begin{footnotesize}9: \end{footnotesize}\mbox{ }\mbox{ }\textbf{return} $h$\\
\mbox{ }\mbox{ }\begin{footnotesize}10: \end{footnotesize}\textbf{end if}\\
%\newline
\mbox{ }\mbox{ }\begin{footnotesize}11: \end{footnotesize}\textbf{if} $h_{n_1P}(x,y)=-h_{n_2P}(x,-y)$ \textbf{then} \\
\mbox{ }\mbox{ }\begin{footnotesize}12: \end{footnotesize}\mbox{ }\mbox{ }
$W\leftarrow \monic(S_2)$\\
\mbox{ }\mbox{ }\begin{footnotesize}13: \end{footnotesize}\mbox{ }\mbox{ }
$L \leftarrow L(W,S_1)$ \hspace{5.2cm}\begin{footnotesize}$\triangleright$ see Proposition \ref{hPQ} \end{footnotesize}\\
\mbox{ }\mbox{ }\begin{footnotesize}14: \end{footnotesize}\mbox{ }\mbox{ }
compute $h=y-(\gamma_1x+\gamma_0)$ by solving the linear system associated to $L$\\
\mbox{ }\mbox{ }\begin{footnotesize}15: \end{footnotesize}\mbox{ }\mbox{ }\textbf{return} $h$\\
\mbox{ }\mbox{ }\begin{footnotesize}16: \end{footnotesize}\textbf{end if}\\
%\newline
\mbox{ }\mbox{ }\begin{footnotesize}17: \end{footnotesize}$G \leftarrow \gcd(fS_{1,2}^2-S_{1,1}^2, fS_{2,2}^2-S_{2,1}^2)$\\
\mbox{ }\mbox{ }\begin{footnotesize}18: \end{footnotesize}decompose $G$ in irreducible factors in $\mathbb{F}_q[x]$\\
\mbox{ }\mbox{ }\begin{footnotesize}19: \end{footnotesize}$W_1,\cdots W_s \leftarrow$ monic distinct irreducible factors of $G$ of degree $3$\\
\mbox{ }\mbox{ }\begin{footnotesize}20: \end{footnotesize}\textbf{for} $j \in \{1,\cdots s\}$ \textbf{do}\\
\mbox{ }\mbox{ }\begin{footnotesize}21: \end{footnotesize}\mbox{ }\mbox{ }\mbox{ }$W \leftarrow W_j$\\
%\hspace{1.2cm}\begin{footnotesize}$\triangleright$ Notation: $H=fh_2^2-h_1^2$\end{footnotesize}%
\mbox{ }\mbox{ }\begin{footnotesize}22: \end{footnotesize}\mbox{ }\mbox{ }\mbox{ }\textbf{if} $W \not = S_{1,2}$ \textbf{then}\\
\mbox{ }\mbox{ }\begin{footnotesize}23: \end{footnotesize}\mbox{ }\mbox{ }\mbox{ }\mbox{ }\mbox{ }
$L \leftarrow L(W,S_1)$ \hspace{5.2cm}\begin{footnotesize}$\triangleright$ see Proposition \ref{hPQ} \end{footnotesize}\\
\mbox{ }\mbox{ }\begin{footnotesize}24: \end{footnotesize}\mbox{ }\mbox{ }\mbox{ }\mbox{ }\mbox{ }
compute $h=y-(\gamma_1x+\gamma_0)$ by solving the linear system associated to $L$\\
\mbox{ }\mbox{ }\begin{footnotesize}25: \end{footnotesize}\mbox{ }\mbox{ }\mbox{ }\mbox{ }\mbox{ }
\textbf{if} $W | (\gamma_1 x + \gamma_0)S_{2,2}+S_{2,1}$ 
\textbf{then} return $h$ \\
\mbox{ }\mbox{ }\begin{footnotesize}26: \end{footnotesize}\mbox{ }\mbox{ }\mbox{ }\textbf{end if}\\
\mbox{ }\mbox{ }\begin{footnotesize}27: \end{footnotesize}\mbox{ }\mbox{ }\mbox{ }\textbf{else}
\hspace{7.2cm}\begin{footnotesize}$\triangleright$  $W=S_{1,2}$\end{footnotesize}\\
\mbox{ }\mbox{ }\begin{footnotesize}28: \end{footnotesize}\mbox{ }\mbox{ }\mbox{ }\mbox{ }\mbox{ }
$L \leftarrow  L(W,S_2)$ \hspace{5.2cm}\begin{footnotesize}$\triangleright$ see Proposition \ref{hPQ} \end{footnotesize}\\
\mbox{ }\mbox{ }\begin{footnotesize}29: \end{footnotesize}\mbox{ }\mbox{ }\mbox{ }\mbox{ }\mbox{ }
compute $h=y-(\gamma_1x+\gamma_0)$ by solving the linear system associated to $L$\\
\mbox{ }\mbox{ }\begin{footnotesize}30: \end{footnotesize}\mbox{ }\mbox{ }\mbox{ }\mbox{ }\mbox{ }
return $h$\\
\mbox{ }\mbox{ }\begin{footnotesize}31: \end{footnotesize}\mbox{ }\mbox{ }\mbox{ }\textbf{end if}\\
\mbox{ }\mbox{ }\begin{footnotesize}32: \end{footnotesize}\textbf{end for}
}
\end{subalgorithm}
\hrule
\medskip

\begin{theorem}\label{thsub}
Subalgorithm 1 is correct.
\end{theorem}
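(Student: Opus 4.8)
\emph{Plan.} I would prove the theorem by a case analysis following the control flow of Subalgorithm~1, showing that each branch returns $h_{mP}$. Throughout I set $R_i := m_1P+\varphi^i(m_2P)$ and $N_i := n_1P+\varphi^i(n_2P)$, so that $h_i=h_{R_i}$, $H_i=H_{R_i}$, $k_i=h_{N_i}$, $K_i=H_{N_i}$ as in Notation~\ref{notret}, and I record the two facts that drive the whole argument. First, since $m_1+m_2=n_1+n_2=m$ we have $R_0=N_0=mP$, whence $h_0=k_0=h_{mP}$ and $H_0=K_0=H_{mP}$. Second, by (\ref{spq}) and Lemma~\ref{remHPQ}, up to nonzero constants $S_1=h_0h_1h_2$ and $S_2=k_0k_1k_2 \bmod (y^2-f)$, while a reindexing of the product in part~5 of Lemma~\ref{remHPQ} gives
$$fS_{1,2}^2-S_{1,1}^2=H_0H_1H_2,\qquad fS_{2,2}^2-S_{2,1}^2=K_0K_1K_2.$$
In particular $H_{mP}=H_0$ divides $G=\gcd(fS_{1,2}^2-S_{1,1}^2,\,fS_{2,2}^2-S_{2,1}^2)$.

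\emph{Special branches (lines 1--16).} For lines~1--2 I would use the trace-zero relation: $h_{m_1P}=h_{m_2P}$ forces $m_2P=\varphi^j(m_1P)$, and for the inputs under consideration $m_1P\neq m_2P$, so $j\in\{1,2\}$; then $m_1P+\varphi(m_1P)+\varphi^2(m_1P)=P_\infty$ gives $mP=-\varphi^{3-j}(m_1P)$, hence $h_{mP}=h_{-m_1P}$, which is read off $h_{m_1P}$ via $h_{-R}(x,y)=-h_R(x,-y)$; line~2 is symmetric. For lines~5--16 the test $h_{m_1P}(x,y)=-h_{m_2P}(x,-y)$ says precisely that $\varphi^i(m_1P)+\varphi^j(m_2P)=P_\infty$ for some $i,j$, so the equivalences 6--9 of Lemma~\ref{remHPQ} yield $S_{1,2}=0$ and identify $S_1$ with the degree-three polynomial whose roots are the $x$-coordinates of $m_1P-\varphi(m_1P)$, $m_1P-\varphi^2(m_1P)$, $\varphi(m_1P)-\varphi^2(m_1P)$. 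Since $mP$ is one of these differences, a short computation shows that these three roots are exactly the $x$-coordinates of the Frobenius conjugates of $mP$, i.e. $\monic(S_1)=H_{mP}$; Proposition~\ref{hPQ} applied to $(\monic(S_1),S_2)$ then returns $h_{mP}$ (its hypotheses following from the input constraints). Lines~11--16 are symmetric.

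\emph{Main loop (lines 17--32).} Since $mP\notin E[3](\FF_q)$, $H_{mP}$ is irreducible, so it occurs among the monic degree-three irreducible factors $W_1,\dots,W_s$ of $G$. For a candidate $W=W_j$ with $W\neq S_{1,2}$, Proposition~\ref{hPQ} applied to $(W,S_1)$ is legitimate and returns the line $y-(\gamma_1x+\gamma_0)$ through the three points of the zero set of $S_1$ whose $x$-coordinates are the roots of $W$; writing $W=H_i$, that line is $h_i$. The test in line~25, $W\mid (\gamma_1x+\gamma_0)S_{2,2}+S_{2,1}$, holds if and only if those same three points lie on the zero set of $S_2$, i.e. if and only if $h_i\in\{k_0,k_1,k_2\}$. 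The input hypothesis $h_1,h_2\notin\{k_1,k_2\}$ rules out $h_i=k_\ell$ with $i,\ell\in\{1,2\}$, so the test succeeds only when $h_i=k_0=h_{mP}$; hence \emph{every} line returned through line~25 equals $h_{mP}$, and since $W=H_{mP}$ itself passes the test, the loop does return $h_{mP}$. This is the step where the input assumption is essential: it is exactly what discards the spurious common factors of $G$ arising from ``negative coincidences'' $h_i=-k_\ell(x,-y)$.

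\emph{Main obstacle.} The delicate point, on which I would spend most effort, is the unverified else-branch (lines~27--30), reached when $W=S_{1,2}$: since line~30 returns without the safeguard of line~25, correctness requires that $W=S_{1,2}$ \emph{force} $W=H_{mP}$. I would prove this by a divisor argument in the spirit of the proof of Proposition~\ref{hPQ}: $W=S_{1,2}$ together with $W\mid G$ gives $W\mid S_{1,1}$, hence $W\mid S_1$ as a polynomial in $x,y$, which forces $-R_i$ into the zero set of $S_1$ and therefore $-R_i=\varphi^a(R_j)$ for some $j$; combining with $W\mid fS_{2,2}^2-S_{2,1}^2$ gives $-R_i=\varphi^b(N_\ell)$ for some $\ell$, so $R_j$ and $N_\ell$ are Frobenius conjugates, i.e. $h_j=k_\ell$. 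The input hypothesis excludes $j,\ell\in\{1,2\}$, while $j=0$ or $\ell=0$ yields $R_i=-\varphi^c(mP)$ and hence $H_i=H_{mP}$, i.e. $W=H_{mP}$, as wanted. It remains to check that Proposition~\ref{hPQ} is applicable wherever it is invoked; the only obstruction would be $S_{1,2}=S_{2,2}=H_{mP}$ simultaneously, which the same argument excludes (it would force $h_j=k_\ell$ with $j,\ell\ge 1$, or $mP$ of order $\le 2$, both ruled out by the input hypothesis and $mP\notin E[3](\FF_q)$). Assembling these cases proves that Subalgorithm~1 is correct.
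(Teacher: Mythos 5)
Your proof is correct and follows essentially the same route as the paper's: the same case analysis over the branches of Subalgorithm~1, the identification $\monic(S_1)=H_{mP}$ in the degenerate branches, the use of the hypothesis $h_1,h_2\notin\{k_1,k_2\}$ to show that the divisibility test of line~25 singles out $h_{mP}$, and the same divisor/Frobenius-conjugacy argument forcing $W=H_{mP}$ (and $W\neq S_{2,2}$) in the branch $W=S_{1,2}$. The only difference is that the paper spells out a few hypothesis checks for Proposition~\ref{hPQ} (e.g.\ $S_{2,2}\neq 0$ and $H_{mP}\neq S_{2,2}$ in lines 5--16) that you compress into ``its hypotheses following from the input constraints,'' but your argument for the delicate else-branch supplies exactly the missing verification.
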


To prove the theorem we use the following.

\begin{remark}\label{rmkalg}
Since $T_3$ has prime order $p>3$, then $T_3\cap E[3](\FF_q)=\{P_\infty\}$. Hence $H_Q$ is irreducible over $\FF_q$ for every $Q\in T_3\setminus\{P_\infty\}$, in particular $H_{mP}$ is irreducible over $\FF_q[x]$ for every $0<m<p$. Moreover, $h_{mP}\not=h_{-mP}$, since, if this were the case, then $mP+\varphi^i(mP)=P_\infty$.
\end{remark}

\begin{proof}[Proof of Theorem \ref{thsub}]
If $h_{m_1P} = h_{m_2P}$ as in line $1$ of the subalgorithm, then 
$m_2P=\varphi^i(m_1P)$ for some $i\in \{0,1,2\}$. Since we assume that $m$ is odd, then $m_1\neq m_2$ and $m_1+m_2=m<p$, hence $i\neq 0$. 
Therefore $mP=(m_1+m_2)P=m_1(1+\varphi^i)(P)=-m_1\varphi^j(P)$ where $\{i,j\}=\{1,2\}$, and $i \not = j$. It follows that $h_{mP}=h_{-m_1P}$ and line $1$ is correct.
The same argument shows that, if $h_{n_1P}=h_{n_2P}$ as in line $2$ of the subalgorithm, then
$h_{mP}=h_{-n_1P}$, and line $2$ is correct. 

Correctness of lines $3,4$ follows from Theorem~\ref{thformule}. 

Up to multiplication by a nonzero constant, $S_1=h_{mP}h_1h_2$ and $S_2=h_{mP}k_1k_2 \mod{y^2-f(x)}$. Moreover, by Lemma \ref{remHPQ}, $fS_{1,2}^2-S_{1,1}^2=H_0H_1H_2$ and $fS_{2,2}^2-S_{2,1}^2=H_0K_1K_2$ up to multiplication by a nonzero constant. Suppose first that $h_{m_1P}=h_{-m_2P}$ as in line $5$. Then $S_1=h_{mP}(h_{-mP}) = H_{mP} \mod{y^2-f(x)}$ (up to multiplication by a nonzero constant). In addition, if $h_{m_1P}=h_{-m_2P}$, then $h_{n_1P}\not = h_{-n_2P}$. In fact, if $h_{n_1P}=h_{-n_2P}$, then $S_2=h_{mP}h_{-mP}=H_{mP}=S_1 \mod{y^2-f(x)}$ (up to multiplication by a nonzero constant), which is not possible since we are supposing $h_1,h_2 \not \in \{k_1,k_2\}$. The inequality $h_{n_1P}\not = h_{-n_2P}$ implies $S_{2,2}\not = 0$ by Lemma~\ref{remHPQ}. Moreover, by Remark \ref{rmkalg}, $H_{mP}$ is irreducible over $\FF_q[x]$. 
So, in order to apply Proposition \ref{hPQ} with $W=\monic(S_1)$ and $S_2$, it remains to prove that $H_{mP}\not = S_{2,2}$. Suppose this is not the case. Then $k_i=h_{-mP}$ for some $i \in \{0,1,2\}$. Since $h_{mP}\not= h_{-mP}$ by Remark \ref{rmkalg}, we have that $i \in \{1,2\}$ and $k_i=h_{-mP}=h_1$, which is not possible because $h_1, h_2 \not \in \{k_1,k_2\}$ by assumption. 
Hence one can apply Proposition \ref{hPQ} to $W=H_{mP}=\monic(S_1)$ and $S_2$, and correctness of lines $5-10$ follows. The proof of correctness of lines $11-16$ is analogous to that for lines $5-10$. 

From now on, we may assume that $h_{m_1P}\not =h_{-m_2P}$ and $h_{n_1P}\not = h_{-n_2P}$, which imply $S_{1,2}, S_{2,2} \not = 0$ by Lemma~\ref{remHPQ}.
Let $1 \leq s \leq 3$, $W_1,\ldots,W_s$ the monic distinct irreducible factors of degree $3$ over $\mathbb{F}_q[x]$ of $G=\gcd(fS_{1,2}^2-S_{1,1}^2,fS_{2,2}^2-S_{2,1}^2)$. By Remark \ref{rmkalg}, $H_0 \in \{W_1,\cdots, W_s\}$.
Moreover, for $W \in \{W_1,\ldots,W_s\}$, one has that $W=H_j$ for some $j \in \{0,1,2\}$.
Then, if $W \neq S_{1,2}$, one recovers $h=h_j$ from $W$ and $S_1$ by solving the linear system of Proposition \ref{hPQ}  (lines $22$-$24$ of the subalgorithm). 

We now consider line $25$. If $h=h_0=h_{mP}$, one has that $W | (\gamma_1 x +\gamma_0)S_{2,2}+S_{2,1}$. 
Else, $h \not = k_s$ for all $s \in \{0,1,2\}$, as $h_1,h_2 \not \in \{k_1,k_2\}$ by hypothesis. 
So $W \nmid(\gamma_1 x + \gamma_0)S_{2,2}+S_{2,1}$ by Proposition \ref{hPQ}, and line $25$ is correct.

Finally, suppose that $W = S_{1,2}$ as in line $26$. If $W \not = H_0$, one has that there exists $r \in \{1,2\}$ such that $h_j = -(h_r(x,-y))$. 
Moreover, there exists $s\in \{1,2\}$ such that $h_j=-(k_s(x,-y))$, since $W|G$ and $h_1,h_2 \not \in \{k_1,k_2\}$. 
Then $h_r=k_s$ with $r$, $s \in \{1,2\}$, that is not possible as $h_1,h_2 \not \in \{k_1,k_2\}$.
Hence $W=H_0$ and there exists $r \in \{1,2\}$ such that $h_{mP} \not = h_{rP} = h_{-mP}$,
from which $k_s \not = h_{-mP}$ for all $s\in \{0,1,2\}$, since $h_1,h_2 \not \in \{k_1,k_2\}$. 
So $W \not = S_{2,2}$, one recovers $h=h_{mP}$ from $W$ and $S_2$ by solving the linear system of Proposition \ref{hPQ}, and lines $26$-$30$ are correct. 
\end{proof}

We use the subalgorithm at each step of our Montgomery-ladder-style algorithm. We have two different types of input lines: The first is used in the general case, and the second for  special cases. 
\begin{enumerate}
\item[(a)] {\bf Input lines of type (a)}:
The subalgorithm computes $h_{mP}$ from $h_P$, $h_{(m-1)P}$, $h_{\frac{m-1}{2}P}$ and $h_{\frac{m+1}{2}P}$. 
The subalgorithm does not apply to a set $M$ of special values for $m$.
\item[(b)] {\bf Input lines of type (b)}: 
Let $R=\{(-3,-7),(-3,5),(3,-5),(3,7)\}$, $(r_1,r_2) \in R$. 
The subalgorithm computes $h_{mP}$ for $h_{r_iP}$, $h_{(m-r_i)P}$ for $i \in \{1,2\}$. 
The subalgorithm does not apply to a set $M_{(r_1,r_2)}$ of special values for $m$.
\end{enumerate}
In the next lemma we describe the sets $M$ and $M_{(r_1,r_2)}$. Moreover, we show that $M \cap (\bigcup_{(r_1,r_2) \in R}M_{(r_1,r_2)}) = \emptyset$. 
Therefore, one can compute $h_{mP}$ using the subalgorithm with input of type (a) if $m\not\in M$ and with input of type (b) if $m\in M$.

\begin{lemma}\label{m_ecc}
In the setting established above, one has the following:
\begin{enumerate}
\item $h_{ P + (m-1 )\varphi^i(P)} = h_{\frac{m-1}{2}P+\frac{m+1}{2}\varphi^j(P)} $ for some $i$, $j \in \{1,2\}$ if and only if $m \in M$, where
$$M=  
\left\{\frac{\pm 3}{2s+1},\frac{s-4}{3s},\frac{4s-1}{2s+1},\frac{s+5}{3(s+1)},\frac{4s+5}{2s+1} \mod{p}\right\}.$$
Hence Subalgorithm 1 correctly computes $h_{mP}$ from $h_P$, $h_{(m-1)P}$, $h_{\frac{m-1}{2}P}$ and $h_{\frac{m+1}{2}P}$ if $m \not \in M$.
\item Let $R=\{(-3,-7),(3,7),(-3,5),(3,-5)\}$, $(r_1,r_2) \in R$. 
Then $h_{ r_1P + (m-r_1)\varphi^i(P)} =h_{ r_2P + (m-r_2)\varphi^j(P)}$ for some $i,j \in \{1,2\}$ if and only if $m \in M_{(r_1,r_2)}$, where
\begin{itemize}
\item $M_{(3,7)}=\left\{\frac{17s+4}{2s+1}, \frac{-4s-17}{s-1}, \frac{10s+11}{2s+1},\frac{10s-1}{2s+1},\frac{4s-13}{-s-2},\frac{17s+13}{2s+1} \mod{(p)} \right\}$,
\item $M_{(-3,-7)}= \left\{-m \mod{(p)} \mid m \in M_{(3,7)}\right\}$,
\item $M_{(-3,5)}=\left\{\frac{7s+8}{2s+1},\frac{-8s-7}{s-1},\frac{2s+13}{2s+1},\frac{2s-11}{2s+1},\frac{8s+1}{-s-2},\frac{7s-1}{2s+1} \mod{(p)} \right\}$,
\item $M_{(3,-5)} = \left\{-m \mod{(p)} \mid m \in M_{(-3,5)}\right\}$.
\end{itemize}
Fix $(r_1,r_2) \in R$. Subalgorithm 1 correctly computes $h_{mP}$ from $h_{r_1P}, h_{r_2P}, h_{(m-r_1)P}, h_{(m-r_2)P}$ if $m \not \in M_{(r_1,r_2)}$.
\item One has that $M \cap (\bigcup_{(r_1,r_2) \in R}M_{(r_1,r_2)}) = \emptyset$. 
Hence, if Subalgorithm 1 cannot compute $h_{mP}$ with input of type (a), it can compute it with input of type (b). 
\end{enumerate}
\end{lemma}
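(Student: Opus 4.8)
The plan is to prove each of the three parts by translating the equality of line-equations into an equality of points in $T_3$, and then solving for $m$ as a rational function of $s$. Recall that $T_3 = \langle P \rangle$ is cyclic with $\varphi(P) = sP$, so every point in $T_3$ is an integer multiple of $P$, and two lines $h_{aP} = h_{bP}$ hold exactly when $aP = \pm bP$, i.e. when $a \equiv \pm b \pmod{p}$ (using Remark~\ref{rmkalg}, which rules out $h_{aP} = h_{-aP}$). This is the key dictionary: an equation like $h_{P + (m-1)\varphi^i(P)} = h_{\frac{m-1}{2}P + \frac{m+1}{2}\varphi^j(P)}$ becomes, after substituting $\varphi^i(P) = s^i P$, the congruence $1 + (m-1)s^i \equiv \pm\left(\frac{m-1}{2} + \frac{m+1}{2}s^j\right) \pmod p$ for some $i,j \in \{1,2\}$.

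First I would handle part~1. Using $\varphi(P) = sP$ and $\varphi^2(P) = s^2 P$, together with the relation $s^2 + s + 1 \equiv 0 \pmod p$ (which holds because $\varphi^2 + \varphi + 1 = 0$ on $T_3$, so $s^2 + s + 1 \equiv 0$), I would write out the finitely many cases indexed by $(i,j) \in \{1,2\}^2$ and by the choice of sign $\pm$. Each case gives a linear equation in $m$ whose coefficients are polynomials in $s$; solving for $m$ and simplifying modulo $s^2 + s + 1 \equiv 0$ yields one of the five listed values. The relation $s^2 = -s-1$ is what collapses the denominators $3s$, $3(s+1)$, and $2s+1$ into the stated forms, and it is essential for checking that the various cases either coincide or produce exactly the five entries of $M$. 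The reverse implication (that each $m \in M$ actually produces an equality of lines) follows by reversing these computations. The final sentence of part~1 is then immediate from Theorem~\ref{thsub}: the hypothesis $h_1, h_2 \notin \{k_1, k_2\}$ required by Subalgorithm~1 is precisely the negation of the displayed equality, so $m \notin M$ guarantees correctness.

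Part~2 is the same computation carried out four times, once for each $(r_1, r_2) \in R$, replacing the input multiples $1, m-1, \frac{m-1}{2}, \frac{m+1}{2}$ by $r_1, m - r_1, r_2, m - r_2$. For fixed $(r_1, r_2)$ the equation $h_{r_1 P + (m-r_1)\varphi^i(P)} = h_{r_2 P + (m-r_2)\varphi^j(P)}$ becomes $r_1 + (m - r_1)s^i \equiv \pm\left(r_2 + (m - r_2)s^j\right)$, again a linear equation in $m$ for each of the finitely many $(i, j, \pm)$ choices, and solving and reducing modulo $s^2 + s + 1$ gives the six values listed in $M_{(r_1,r_2)}$. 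The cases $(-3,-7)$ and $(3,-5)$ follow from $(3,7)$ and $(-3,5)$ respectively by the symmetry $m \mapsto -m$ (negating all the multiples negates the point, which sends $h$ to $h_{-\,\cdot}$ but preserves the equality structure), which is why those two sets are defined as negatives. The correctness conclusion again follows from Theorem~\ref{thsub}.

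For part~3 I would verify that $M$ and $\bigcup_{(r_1,r_2) \in R} M_{(r_1,r_2)}$ are disjoint. The clean way is to suppose an element of $M$ equals an element of some $M_{(r_1,r_2)}$; setting the two rational expressions in $s$ equal and clearing denominators gives a polynomial equation in $s$, which I would reduce modulo $s^2 + s + 1$ to a linear (or constant) condition on $s$. Since $p > 3$ is a large prime and $s$ is a primitive cube root of unity modulo $p$ (so $s \neq 1$ and $s$ satisfies no lower-degree relation), these forced conditions on $s$ are never satisfied, establishing disjointness. \textbf{The main obstacle} I anticipate is purely bookkeeping: there are many $(i, j, \text{sign})$ combinations across the five sets, and the reductions modulo $s^2 + s + 1 \equiv 0$ must be done carefully to confirm that the solutions collapse to exactly the listed values and that no spurious coincidence between $M$ and the $M_{(r_1,r_2)}$ slips through. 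The conceptual content is light once the point-equation dictionary and the relation $s^2 + s + 1 \equiv 0$ are in hand; the risk is an arithmetic slip in one of the many parallel cases.
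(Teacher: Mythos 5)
Your key dictionary is wrong, and it is the step on which everything else rests. The line $h_{aP}$ passes through $aP$ \emph{and its two Frobenius conjugates}, so $h_{aP}=h_{bP}$ holds if and only if $bP$ is a Frobenius conjugate of $aP$, i.e.\ $b\equiv s^{\ell}a \pmod p$ for some $\ell\in\{0,1,2\}$ --- not if and only if $a\equiv\pm b\pmod p$. The minus sign in your criterion is never realized: Remark~\ref{rmkalg} says precisely that $h_{aP}\neq h_{-aP}$, so $a\equiv -b$ forces the two lines to be \emph{different}, contradicting your own parenthetical appeal to that remark. The correct congruence is therefore $1+(m-1)s^{i}\equiv s^{\ell}\bigl(\tfrac{m-1}{2}+\tfrac{m+1}{2}s^{j}\bigr)\pmod p$ with an extra parameter $\ell\in\{0,1,2\}$, which is exactly how the paper proceeds. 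With your version the computed exceptional set is not $M$: for instance the choice $\ell=1$, $i=j=1$ yields $m\equiv -3/(2s+1)$, a genuine element of $M$ that your $\pm$ cases never produce, while your minus-sign case with $i=j=1$ yields $m\equiv(s-1)/(3s+1)$, which one checks modulo $s^2+s+1\equiv 0$ is not in $M$. So your sets would be simultaneously too small and too large; the same defect propagates to part~2 and to the disjointness claim in part~3. (The final deduction of each part from Theorem~\ref{thsub} is fine; the error is confined to, but fatal for, the computation of the exceptional sets.)

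A secondary point concerns part~3. The justification that ``$s$ satisfies no lower-degree relation'' is not right as stated: $s$ is an element of $\mathbb{F}_p$ and satisfies plenty of linear relations. What the paper actually shows is that a coincidence between $M$ and some $M_{(r_1,r_2)}$ would force $as\equiv b$ or $as\equiv -b \pmod p$ for explicit integers $0<a,b\leq 60$ with $a\neq b$; combined with $s^2+s+1\equiv 0$ this gives $a^2\pm ab+b^2\equiv 0\pmod p$, which is impossible only because that integer is positive and much smaller than the cryptographically large $p$. This quantitative step is the actual content --- the paper's remark immediately after the lemma exhibits a small $p$ for which the disjointness fails --- so a correct write-up must make it explicit rather than appeal to $s$ being ``generic.''
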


\begin{proof}
By Theorem \ref{thsub}, and following Notation \ref{notret}, we have that Subalgorithm 1 correctly computes $h_{mP}$ from the input lines $h_{m_1P}=h_P$, $h_{m_2P}=h_{(m-1)P}$, $h_{n_1P}=h_{\frac{m-1}{2}P}$ and $h_{n_2P}=h_{\frac{m+1}{2}P}$ if $h_1,h_2\not \in \{k_1,k_2\}$, that is, if
$h_{P+(m-1)\varphi^i(P)}\not = h_{\frac{m-1}{2}P+\frac{m+1}{2}\varphi^j(P)}$ for all $i,j \in \{1,2\}$. 
We have that 
$$h_{P+(m-1)\varphi^i(P)} = h_{\frac{m-1}{2}P+\frac{m+1}{2}\varphi^j(P)} \mbox{ for some } i,j \in \{1,2\}$$ if and only if 
$$P+(m-1)\varphi^i(P)=\varphi^\ell\left(\frac{m-1}{2}P+\frac{m+1}{2}\varphi^j(P)\right) \mbox{ for some } i,j \in \{1,2\}, \ell \in \{0,1,2\}.$$ 
Since $\varphi(P)=sP$ and $P$ is of order $p$, the last equality is equivalent to 
\begin{equation}\label{equal}
1+(m-1)s^i=s^\ell\left(\frac{m-1}{2}+\frac{m+1}{2}s^j\right) \mod{p} \mbox{ for some } i,j \in \{1,2\}, \ell \in \{0,1,2\}.
\end{equation}
Moreover, $P \in {T}_3$, so $P+\varphi(P)+\varphi^2(P)=P_{\infty}$, hence
\begin{equation}\label{equazT3}
1+s+s^2=0 \mod{p},
\end{equation}
since $\varphi(P)=sP$ and $P$ has order $p$. 
From (\ref{equazT3}) one directly computes that (\ref{equal}) is equivalent to the statement that $m \in M$. Notice that all denominators in $M$ are nonzero modulo $p$, since (\ref{equazT3}) holds and $p \not= 2,3$. We have then proved part 1 of the lemma.

The proof for part 2 is analogous to that of part 1.

We now prove part 3. Suppose that $M \cap (\bigcup_{(r_1,r_2) \in R}M_{(r_1,r_2)}) \not = \emptyset$. 
One can check by direct computation that $as=b \mod{p}$ or $as=-b \mod{p}$ for some $a$ and $b$ such that $0<a,b\leq 60$ and $a\not=b$. If $as=b \mod{p}$, then from (\ref{equazT3}) one obtains that $a^2+ab+b^2 = 0 \mod{p}$, which is not possible since $0<a^2+ab+b^2 \ll p$. The case $as=-b \mod{p}$ can be treated similarly.
\end{proof}

\begin{remark} 
%Consider $E$ and $P$ as in Example \ref{exPQ}. We have that the order of $T_3$ is $p=1021381$ and $s=161217$.
%Though $p$ is not of cryptographical size, it is big enough for Lemma \ref{m_ecc} to hold. 
Lemma~\ref{m_ecc} is no longer true for small values of $p$. Consider e.g. the elliptic curve $y^2=x^3+5x+4$ over $\mathbb{F}_7$, with $p=31$ and $s=25$. 
We have $M\cap M_{(-3,-7)}=\{7,11,13\}\cap\{13,15\}=\{13\}\not = \emptyset$. 
%In fact we compute $(r_1-r_2)s^i+r_2s^h-r_1 \not = 0 \mod{p}$ for $(h,i) \in \{ (1,2),(2,1)\}$ and
%$M \cap (\bigcup_{(r_1,r_2) \in R}M_{(r_1,r_2)}) =$\\ $\{161219, 322435, 322437, 465965\} \cap \{483651,107131,107135,483657,483661,107145,483651,215301,215293,
%483657,483649,215291\} = \emptyset$.
\end{remark}

\begin{example}\label{exalg3}
Let $q = 1021$ and $\mathbb{F}_{q^3}=\mathbb{F}_q[\zeta]/(\zeta^3-5)$. We consider the same $E$ and $P$ as in Example \ref{exPQ}, i.e., we let $E$ be the elliptic curve over $\mathbb{F}_q$ of equation $y^2=x^3+230x+191$ and let $P= (782\zeta^2 + 802\zeta + 45,979\zeta^2 + 299\zeta + 133)$. Then $p=1021381$, $s=161217$, $M=\{161219, 322435, 322437, 465965 \}$.

We show how to compute $h_{5P}$ using Subalgorithm 1 with input of type (a). 
In Example \ref{exPQ} we computed $h_{2P}$ and $h_{3P}$. Using formulas $(1)$ and $(2)$ in the appendix, we compute $h_{4P}=y-(698x+155)$ from $h_{2P}$, 
$S_1=(524x^4  + 131x^3 + 826x^2  + 631x  + 160)+y(x^3 + 243x^2+ 651x+ 776)$ from $h_P$ and $h_{4P}$,
$S_2=(331x^4  + 653x^3  + 169x^2  + 259x  + 536) +y(x^3+ 570x^2+ 680x+ 578)$ from $h_{2P}$ and $h_{3P}$.
Then we compute $G=\gcd(fS_{1,2}^2-S_{1,1}^2, fS_{2,2}^2-S_{2,1}^2)= x^3 + 455x^2 + 81x+ 68$,
hence $G=H_{5P}$, and $H_{5P} \not = S_{1,2}$. 
So we obtain $h_{5P}=y-(736x+804)$ from $G$ and $S_1$ as in line $24$ of Subalgorithm 1. 

Similarly one can compute $h_{7P}=y-(112x+43)$ from $h_P$, $h_{6P}$, $h_{3P}$, $h_{4P}$.
\end{example}

The next two examples illustrate special cases of Subalgorithm 1.

\begin{example}\label{exalg3spec}
Let $E$ and $P$ be as in the previous example and let $m=337887$. One can check that 
$$P+(m-1)\varphi^2(P)=-\frac{m-1}{2}\varphi^2(P)-\frac{m+1}{2}\varphi(P).$$ If we try to compute $h_{mP}$  using Subalgorithm 1 with input of type (a), we first compute
$G=x^6 + 778x^5 + 86x^4 + 778x^3 + 599x^2 + 494x + 658$, which splits over $\mathbb{F}_q$ into two irreducible factors of degree $3$,
namely $W_1=x^3 + 11x^2 + 843x + 540$ and $W_2=x^3 + 767x^2 + 1016x + 5$.
From $W_1$ we recover $h_1=y-(166x+727)=0$ which is the line through $P+(m-1)\varphi^2(P)$, 
from $W_2$ we recover $h_2=y-(423x+57)=0$ which is the line through $mP$.
By checking the condition of line $25$ of the subalgorithm, we are able to decide that $h_{mP}=h_2$.
\end{example}

\begin{example}\label{exalg3spec2}
Let $q = 1021$ and $\mathbb{F}_{q^3}=\mathbb{F}_q[\zeta]/(\zeta^3-5)$. Let $E$ be the elliptic curve of equation $y^2=x^3+71x+529$ defined over $\FF_q$. Then $T_3$ is generated by $P=(853\zeta^2 + 995\zeta + 244 , 178\zeta^2 + 927\zeta + 959 )$, which has prime order $p=1009741$. Moreover $s=325960$ and $M_{(3,-5)} = \{32671,391027\}$.
Let $m=65339$. One can check that $mP=-3P-(m-3)\varphi^2(P)$. 
We compute $h_{mP}$ using Subalgorithm 1 with input of type (b), with $(r_1,r_2)=(3,-5)$. We obtain $G=S_{1,2}$, then we can compute $h_{mP}=y-(566x+37)$ from $G$ and $S_2$. 
\end{example}

\subsection{A first algorithm for scalar multiplication}\label{montgomery}

We now present our Montgomery-ladder style algorithm for scalar multiplication in its basic form.

\begin{notation}
Let $m$ be an integer with $0<m<p$. Let $m=\sum_{i=0}^{\ell - 1}m_i2^i$ be the binary representation of $m$, with $m_i \in \{0,1\}$ for all $i$, $\ell = \lceil \log_2m \rceil$ and $m_{\ell-1}=1$. Let $$k_{i}=\sum_{j=i}^{\ell-1}m_j2^{j-i}$$ for $i \in \{0,\cdots, \ell-1\}$.
Notice that $k_{0}=m$. Finally, let 
$$M=  
\left\{\frac{\pm 3}{2s+1},\frac{s-4}{3s},\frac{4s-1}{2s+1},\frac{s+5}{3(s+1)},\frac{4s+5}{2s+1} \mod{p}\right\}$$ and define $\mathcal{M}=M\cap(2\mathbb{Z}+1)$.
\end{notation}

{\bf General strategy of the algorithm.} Our algorithm takes $h_P$ and $m$ as input, and it returns $h_{mP}$ as output. 
It adopts the classical double-and-add strategy for scalar multiplication: It computes
$$u_{i}=h_{k_{i}P} \mbox{ and } v_{i}=h_{(k_{i}+1)P}$$
for decreasing values of $i$. At the end of the cycle, it outputs $u_{0}=h_{mP}$. 
In order to compute the polynomials $u_{i}$ and $v_{i}$, the algorithm uses the doubling formulas of the appendix and Subalgorithm 1 with input the polynomials that it has computed in the previous steps.

The proposition below gives recursive definitions for $u_{i}$ and $v_{i}$ Our algorithm applies this proposition to construct the polynomials $u_{i}$ and $v_{i}$ at each step $i$. 

\begin{notation}
Write ${\Subalg}(h_1,h_2,h_3,h_4)$, for the output of Subalgorithm 1 with input $h_1,h_2,h_3,h_4$. 
For any $Q \in T_3$, let $D(h_Q)=h_{2Q}$, where $h_{2Q}$ is computed from the coefficients of $h_Q$ via the doubling formulas from the appendix. 
Then $D^k(h_Q)=h_{2^kQ}$, where $h_{2^kQ}$ is computed from $h_Q$ via iteration of the doubling formulas from the appendix. 
\end{notation}

\begin{proposition}\label{propalgo}
For $i$ from $i=\ell-1$ down to $i=0$, recursively define $u_{i}$ and $v_{i}$ as follows.
\begin{itemize}
\item $u_{\ell-1}=h_P$, $v_{\ell-1}=h_{2P}$.
\item $u_{\ell-2}=h_{2P}$ and $v_{\ell-2}=h_{3P}$ if $m_{\ell-2}=0$,\\
$u_{\ell-2}=h_{3P}$ and $v_{\ell-2}=h_{4P}$ if $m_{\ell-2}=1$.
\item For $0\leq i \leq \ell-3$:\\
\begin{itemize}
\item (General case) if $k_{i}, k_{i}+1 \not \in \mathcal{M}$, let\\
$u_{i}=D(u_{i+1})$ and $v_{i}=\Subalg(h_P,D(u_{i+1}),u_{i+1},v_{i+1})$ if $m_i=0$,\\
$u_{i}=\Subalg(h_P,D(u_{i+1}),u_{i+1},v_{i+1})$ and $v_{i}=D(v_{i+1})$ if $m_i=1$.\\
\item (Special cases) if $k_{i}$ or $k_{i}+1 \in \mathcal{M}$:\\
\begin{itemize}
\item If $m_i=0$, let \\
$u_{i}=D(u_{i+1})$ and 
$v_{i}$=$\left\{\begin{tabular}{lll}
$\Subalg(h_{3P},D^2(u_{i+2}),h_{7P},D^3(u_{i+3}))$ & if & $m_{i+1}=m_{i+2}=1$, \\
$\Subalg(h_{3P},D^3(u_{i+3}),h_{-5P},D^3(v_{i+3}))$ & if & $m_{i+1}=1, m_{i+2}=0$, \\
$\Subalg(h_{-3P},D^3(v_{i+3}),h_{5P},D^3(u_{i+3}))$ & if & $m_{i+1}=0, m_{i+2}=1$, \\
$\Subalg(h_{-3P},D^2(v_{i+2}),h_{-7P},D^3(v_{i+3}))$ & if & $m_{i+1}=m_{i+2}=0$. \\
\end{tabular}\right.$
\item If $m_i=1$, let \\
$v_{i}=D(v_{i+1})$ and 
$u_{i}$=$\left\{\begin{tabular}{lll}
$\Subalg(h_{3P},D^2(u_{i+2}),h_{7P},D^3(u_{i+3}))$ & if & $m_{i+1}=m_{i+2}=1$, \\
$\Subalg(h_{3P},D^3(u_{i+3}),h_{-5P},D^3(v_{i+3}))$ & if & $m_{i+1}=1, m_{i+2}=0$, \\
$\Subalg(h_{-3P},D^3(v_{i+3}),h_{5P},D^3(u_{i+3}))$ & if & $m_{i+1}=0, m_{i+2}=1$, \\
$\Subalg(h_{-3P},D^2(v_{i+2}),h_{-7P},D^3(v_{i+3}))$ & if & $m_{i+1}=m_{i+2}=0$. \\
\end{tabular}\right.$
\end{itemize}\end{itemize}
\end{itemize}
Then $u_{i}=h_{k_{i}P}$ and $v_{i}=h_{(k_{i}+1)P}$, for all $i \in \{0,\cdots, \ell-1\}$. 
\end{proposition}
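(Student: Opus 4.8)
The plan is to prove $u_i=h_{k_iP}$ and $v_i=h_{(k_i+1)P}$ simultaneously by downward induction on $i$, from $i=\ell-1$ to $i=0$, using the recurrence $k_i=2k_{i+1}+m_i$ that follows at once from $k_i=\sum_{j=i}^{\ell-1}m_j2^{j-i}$. The organizing idea is to attach to each line its scalar label (the integer $c$ with the line equal to $h_{cP}$) and to observe that, in every branch of the recursion, exactly one of $u_i,v_i$ is obtained by a single doubling while the other is obtained from a subalgorithm call whose target is the odd number $t:=2k_{i+1}+1$; indeed $\{k_i,k_i+1\}=\{2k_{i+1},2k_{i+1}+1\}$, so the doubling always produces the even label and the subalgorithm always produces the odd one, whether that odd label is $k_i$ (if $m_i=1$) or $k_i+1$ (if $m_i=0$). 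Because doubling via the formulas of the appendix is unconditionally valid, the whole proof reduces to verifying, in each branch, that the four input labels fed to the subalgorithm split as $m_1+m_2=n_1+n_2=t$ and that the relevant applicability hypothesis of Theorem \ref{thsub} holds.

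First I would dispatch the base cases by direct evaluation: $k_{\ell-1}=1$ yields $u_{\ell-1}=h_P,\,v_{\ell-1}=h_{2P}$, and $k_{\ell-2}=2+m_{\ell-2}\in\{2,3\}$ reproduces the two listed assignments. For the inductive step with $0\le i\le\ell-3$, suppose the claim holds for all larger indices; then $D(u_{i+1})=h_{2k_{i+1}P}$, $u_{i+1}=h_{k_{i+1}P}$, $v_{i+1}=h_{(k_{i+1}+1)P}$ carry labels $2k_{i+1},k_{i+1},k_{i+1}+1$ respectively. In the general case the subalgorithm is called as $\Subalg(h_P,D(u_{i+1}),u_{i+1},v_{i+1})$, with label pairs $1+2k_{i+1}$ and $k_{i+1}+(k_{i+1}+1)$, both equal to $t$; this is exactly the type-(a) input associated with target $t$. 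Since $t$ is the odd member of $\{k_i,k_i+1\}$ and $\mathcal{M}$ consists only of odd residues, the hypothesis $k_i,k_i+1\notin\mathcal{M}$ forces $t\notin M$, so Lemma \ref{m_ecc}(1) and Theorem \ref{thsub} guarantee that the call returns $h_{tP}$; the doubling branch returns the even label, completing the general case.

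For the special cases, where $k_i$ or $k_i+1$ lies in $\mathcal{M}$, the odd label $t\in\mathcal{M}\subseteq M$ cannot be computed with type-(a) input, so a type-(b) call is used instead. Expanding $t$ twice via the recurrence gives $t=8k_{i+3}+4m_{i+2}+2m_{i+1}+1$, and I would check the four branches by label arithmetic, using $D^2(u_{i+2})=h_{4k_{i+2}P}$, $D^3(u_{i+3})=h_{8k_{i+3}P}$, $D^2(v_{i+2})=h_{4(k_{i+2}+1)P}$, and $D^3(v_{i+3})=h_{8(k_{i+3}+1)P}$. For example, if $m_{i+1}=m_{i+2}=1$ then $t=8k_{i+3}+7$ and $\Subalg(h_{3P},D^2(u_{i+2}),h_{7P},D^3(u_{i+3}))$ has label pairs $3+4k_{i+2}$ and $7+8k_{i+3}$, both equal to $t$, matching $(r_1,r_2)=(3,7)\in R$; the three remaining branches match $(3,-5)$, $(-3,5)$, and $(-3,-7)$ after the analogous substitutions. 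The decisive point is applicability: since $t\in M$ and Lemma \ref{m_ecc}(3) gives $M\cap M_{(r_1,r_2)}=\emptyset$, we obtain $t\notin M_{(r_1,r_2)}$, so Lemma \ref{m_ecc}(2) together with Theorem \ref{thsub} shows the type-(b) call returns $h_{tP}$, and again the doubling branch returns the even label.

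The main obstacle is organizational rather than conceptual: one must track the labels faithfully through the iterated doublings $D^2,D^3$ and the negative shifts $-3,-5,-7$, and confirm in each special branch both that the labels close up to $t$ and that the induced pair belongs to $R$. A secondary point is the well-foundedness of the recursion at the top: the branches invoking $u_{i+3},v_{i+3}$ require $i\le\ell-4$, so I would note separately that for the largest indices the labels $k_i$ are too small to belong to $\mathcal{M}$---whose elements are large residues modulo the cryptographic prime $p$---whence the special cases do not occur there and the induction proceeds unobstructed.
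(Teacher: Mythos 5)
Your proof is correct and follows essentially the same route as the paper's: downward induction on $i$, tracking the scalar labels through the doublings, and matching each subalgorithm call to a type-(a) or type-(b) input whose applicability is guaranteed by Lemma~\ref{m_ecc} (parts 1, 2 and 3 respectively). The only point to tighten is the well-foundedness remark at $i=\ell-3$: rather than appealing to the elements of $\mathcal{M}$ being ``large,'' one should note, as the paper does, that $5,7\notin\mathcal{M}$, which follows from the same small-integer-relation argument (contradicting $s^2+s+1\equiv 0 \bmod p$) used in the proof of Lemma~\ref{m_ecc}.3.
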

\begin{proof}
We proceed by induction on $i$. The thesis is easily verified for $i=\ell-1$ and $i=\ell-2$. 
Hence let $0\leq i \leq \ell-3$ and assume that the thesis holds for $j \in \{i+1,\cdots,\ell-1\}$. 
Suppose first that $k_{i}, k_{i}+1 \not \in \mathcal{M}$ and that $m_i=0$ (the proof for the case $m_i=1$ is analogous). 
Then $k_{i}=2(k_{i+1})$ and $u_{i}=D(u_{i+1})=h_{2k_{i+1}P}=h_{k_{i}P}$ by induction. Moreover, by induction we get
$$\Subalg(h_P,D(u_{i+1}),u_{i+1},v_{i+1})=\Subalg(h_P,h_{2k_{i+1}P},h_{k_{i+1}P},h_{(k_{i+1}+1)P})=$$
$$\Subalg\left(h_P,h_{k_iP},h_{\frac{k_i}{2}P},h_{\left(\frac{k_i}{2}+1\right)P}\right).$$
Since $k_{i}+1\not \in \mathcal{M}$, Subalgorithm 1 with input of type (a) correctly outputs $v_{i}=h_{(k_{i}+1)P}$.
Now suppose that $k_{i}$ or $k_{i}+1 \in \mathcal{M}$
and assume that $m_i=0$, $m_{i+1}=m_{i+2}=1$ (the proof for the other cases is analogous). 
If $k_{i}$ or $k_{i}+1 \in \mathcal{M}$, then $i < \ell-3$, since $5,7 \not \in \mathcal{M}$. 
Hence we already have computed the polynomials of the three previous steps $i+1$, $i+2$, $i+3$. 
Since $m_i=0$, we prove the thesis for $u_{i}$ as in the general case. On the other hand, $k_{i}+1 \in \mathcal{M}$ so we cannot define $v_{i}$
using Subalgorithm 1 with input of type (a), as we did before. 
However $k_{i}+1=3+4k_{i+2}=7+8k_{i+3}$, so by induction we get
$$\Subalg(h_{3P},D^2(u_{i+2}),h_{7P},D^3(u_{i+3}))=\Subalg(h_{3P},h_{4(k_{i+2})P},h_{7P},h_{8(k_{i+3})P})=$$
$$\Subalg(h_{3P},h_{(k_{i}-2)P},h_{7P},h_{(k_{i}-6)P}).$$
Moreover, since $k_{i}+1 \in \mathcal{M}$, then $k_{i}+1 \not \in M_{(3,7)}$ by Lemma \ref{m_ecc}, hence Subalgorithm 1 with input of type (b) correctly outputs 
$v_{i}=h_{(k_{i}+1)P}$. 
\end{proof}

\begin{remark}
If $k_{i}, k_{i}+1 \not \in \mathcal{M}$, at step $i$ one needs only the polynomials computed in the previous step in order to compute the polynomials $u_i,v_i$.
If $k_{i}$ or $k_{i}+1 \in \mathcal{M}$ one needs the polynomials computed in the steps $i+2$ and $i+3$ in order to compute them. Therefore:
\begin{itemize}
\item In our algorithm, the last three pairs of polynomials that have been computed are stored in a vector $L$, which is updated at each step of the cycle. 
\item The algorithm looks for the $i$'s for which $k_{i}$ or $k_{i}+1 \in \mathcal{M}$ at the start: For each $i \in \{0,\cdots,\ell-2\}$, it computes $k_{i}$ and $k_{i}+1$, and it adds $i$ to the list $S$ if $k_{i}$ or $k_{i} +1\in \mathcal{M}$. Hence, at each step $i$, we know whether we have to call Subalgorithm 1 with input of type (a) or of type (b), by simply checking if $i \in S$. 
\end{itemize}
\end{remark}

\hrule
\begin{algorithm}[Scalar multiplication in $T_3$]
\textbf{ }\\
\hrule
\textbf{ }\\
\textbf{Input} : $h_P$, $m$ an integer modulo $p$.\\
\textbf{Output} : $h_{mP}$.\\
\hrule
\textbf{ }\\
\mbox{ }\mbox{ }\begin{footnotesize}1 : \end{footnotesize}
$m \leftarrow \sum_{i=0}^{\ell - 1}m_i2^i$ binary expansion of $m$\\
\mbox{ }\mbox{ }\mbox{ }\mbox{ }\mbox{ }\mbox{ }\mbox{ }\begin{footnotesize}$\triangleright$ collection of the special steps \end{footnotesize}\\
\mbox{ }\mbox{ }\begin{footnotesize}2 : \end{footnotesize} 
$S \leftarrow \{i \in \{0,\cdots,\ell-2\} \mbox{ : } k_i \leftarrow \sum_{j=i}^{\ell-1}m_j2^{j-i} \in \mathcal{M} \mbox{ or }  k_i+1 \in \mathcal{M}\}$\\
\mbox{ }\mbox{ }\mbox{ }\mbox{ }\mbox{ }\mbox{ }\mbox{ }\begin{footnotesize}$\triangleright$  step $i=\ell-1$ \end{footnotesize}\\
\mbox{ }\mbox{ }\begin{footnotesize}3 : \end{footnotesize} 
$u \leftarrow h_P$, $v \leftarrow h_{2P}$, $L\leftarrow [(u,v)]$
\hspace{0.5cm}\begin{footnotesize}$\triangleright$ $L=[(u_{\ell-1},v_{\ell-1})]$ \end{footnotesize}\\
\mbox{ }\mbox{ }\begin{footnotesize}4 : \end{footnotesize} 
\textbf{if} $\ell - 1 = 0$ \textbf{then} return $u$ \textbf{end if}\\
\mbox{ }\mbox{ }\mbox{ }\mbox{ }\mbox{ }\mbox{ }\mbox{ }\begin{footnotesize}$\triangleright$ step $i=\ell-2$ \end{footnotesize}\\
\mbox{ }\mbox{ }\begin{footnotesize}5 : \end{footnotesize} 
\textbf{if } $m_{\ell-2} = 0$ \textbf{then} $u \leftarrow h_{2P}$, $v \leftarrow h_{3P}$ \textbf{else} $u \leftarrow h_{3P}$, $v \leftarrow h_{4P}$ \textbf{end if}\\
\mbox{ }\mbox{ }\begin{footnotesize}6 : \end{footnotesize} 
Append $(u,v)$ to $L$
\hspace{0.5cm}\begin{footnotesize}$\triangleright$ $L=[(u_{\ell-1},v_{\ell-1}),(u_{\ell-2},v_{\ell-2})]$ \end{footnotesize}\\
\mbox{ }\mbox{ }\begin{footnotesize}7 : \end{footnotesize} 
\textbf{if} $\ell - 2 = 0$ \textbf{then} return $u$ \textbf{end if}\\
\mbox{ }\mbox{ }\mbox{ }\mbox{ }\mbox{ }\mbox{ }\mbox{ }\begin{footnotesize}$\triangleright$ cycle for: steps from $i=\ell-3$ to $i=0$ \end{footnotesize}\\
\mbox{ }\mbox{ }\begin{footnotesize}8: \end{footnotesize}
\textbf{for} $i$ from $\ell -3$ down to $0$ \textbf{do}\\
\mbox{ }\mbox{ }\mbox{ }\mbox{ }\mbox{ }\mbox{ }\mbox{ }\begin{footnotesize}$\triangleright$ special cases \end{footnotesize}\\
\mbox{ }\mbox{ }\begin{footnotesize}9: \end{footnotesize}
\mbox{ }\mbox{ }\textbf{if} $i \in S$ \textbf{then}\\
\mbox{ }\mbox{ }\begin{footnotesize}10: \end{footnotesize}
\mbox{ }\mbox{ }\mbox{ }\mbox{ }\textbf{if} $m_{i+1} =1$ \textbf{then}\\
\mbox{ }\mbox{ }\begin{footnotesize}11: \end{footnotesize}
\mbox{ }\mbox{ }\mbox{ }\mbox{ }\mbox{ }\mbox{ }\textbf{if} $m_{i+2} =1$ \textbf{then}\\
\mbox{ }\mbox{ }\begin{footnotesize}12: \end{footnotesize}
\mbox{ }\mbox{ }\mbox{ }\mbox{ }\mbox{ }\mbox{ }\mbox{ }\mbox{ }$h_{exc} \leftarrow \Subalg(h_{3P},D^2(L[2][1]),h_{7P},D^3(L[1][1]))$\\
\mbox{ }\mbox{ }\begin{footnotesize}13: \end{footnotesize}
\mbox{ }\mbox{ }\mbox{ }\mbox{ }\mbox{ }\mbox{ }\textbf{else}
\hspace{2.5cm}\begin{footnotesize}$\triangleright$ $m_{i+1}=1$, $m_{i+2}=0$  \end{footnotesize}\\
\mbox{ }\mbox{ }\begin{footnotesize}14: \end{footnotesize}
\mbox{ }\mbox{ }\mbox{ }\mbox{ }\mbox{ }\mbox{ }\mbox{ }\mbox{ }$h_{exc} \leftarrow \Subalg(h_{3P},D^3(L[1][1]),h_{-5P},D^3(L[1][2]))$\\
\mbox{ }\mbox{ }\begin{footnotesize}15: \end{footnotesize}
\mbox{ }\mbox{ }\mbox{ }\mbox{ }\mbox{ }\mbox{ }\textbf{end if}\\
\mbox{ }\mbox{ }\begin{footnotesize}16: \end{footnotesize}
\mbox{ }\mbox{ }\mbox{ }\mbox{ }\textbf{else} 
\hspace{2.7cm}\begin{footnotesize}$\triangleright$ $m_{i+1}=0$\end{footnotesize}\\
\mbox{ }\mbox{ }\begin{footnotesize}17: \end{footnotesize}
\mbox{ }\mbox{ }\mbox{ }\mbox{ }\mbox{ }\mbox{ }\textbf{if} $m_{i+2} =1$ \textbf{then}\\
\mbox{ }\mbox{ }\begin{footnotesize}18: \end{footnotesize}
\mbox{ }\mbox{ }\mbox{ }\mbox{ }\mbox{ }\mbox{ }\mbox{ }\mbox{ }$h_{exc} \leftarrow \Subalg(h_{-3P},D^3(L[1][2]),h_{5P},D^3(L[1][1]))$\\
\mbox{ }\mbox{ }\begin{footnotesize}19: \end{footnotesize}
\mbox{ }\mbox{ }\mbox{ }\mbox{ }\mbox{ }\mbox{ }\textbf{else}
\hspace{2.5cm}\begin{footnotesize}$\triangleright$ $m_{i+1}=0$, $m_{i+2}=0$  \end{footnotesize}\\
\mbox{ }\mbox{ }\begin{footnotesize}20: \end{footnotesize}
\mbox{ }\mbox{ }\mbox{ }\mbox{ }\mbox{ }\mbox{ }\mbox{ }\mbox{ }$h_{exc} \leftarrow \Subalg(h_{-3P},D^2(L[2][2]),h_{-7P},D^3(L[1][2]))$\\
\mbox{ }\mbox{ }\begin{footnotesize}21: \end{footnotesize}
\mbox{ }\mbox{ }\mbox{ }\mbox{ }\mbox{ }\mbox{ }\textbf{end if}\\
\mbox{ }\mbox{ }\begin{footnotesize}22: \end{footnotesize}
\mbox{ }\mbox{ }\mbox{ }\mbox{ }\textbf{end if}\\
\mbox{ }\mbox{ }\begin{footnotesize}23: \end{footnotesize}
\mbox{ }\mbox{ }\textbf{if} $|L|=3$ \textbf{then} remove $L[1]$ from $L$ \textbf{end if} 
\hspace{0.5cm}\begin{footnotesize}$\triangleright$ $L=[(u_{i+2},v_{i+2}),(u_{i+1},v_{i+1})]$ \end{footnotesize}\\
\mbox{ }\mbox{ }\mbox{ }\mbox{ }\mbox{ }\mbox{ }\mbox{ }\begin{footnotesize}$\triangleright$ computation of $u$, $v$ at step $i$\end{footnotesize}\\
\mbox{ }\mbox{ }\begin{footnotesize}24: \end{footnotesize}
\mbox{ }\mbox{ }\textbf{if} $m_i =0$ \textbf{then}\\
\mbox{ }\mbox{ }\begin{footnotesize}25: \end{footnotesize}
\mbox{ }\mbox{ }\mbox{ }\mbox{ }$u\leftarrow D(L[2][1])$\\
\mbox{ }\mbox{ }\begin{footnotesize}26: \end{footnotesize}
\mbox{ }\mbox{ }\mbox{ }\mbox{ }\textbf{if }$i \in S$ \textbf{then}\\
\mbox{ }\mbox{ }\begin{footnotesize}27: \end{footnotesize}
\mbox{ }\mbox{ }\mbox{ }\mbox{ }\mbox{ }\mbox{ }
$v \leftarrow h_{exc}$\\
\mbox{ }\mbox{ }\begin{footnotesize}28: \end{footnotesize}
\mbox{ }\mbox{ }\mbox{ }\mbox{ }\textbf{else}\\
\mbox{ }\mbox{ }\begin{footnotesize}29: \end{footnotesize}
\mbox{ }\mbox{ }\mbox{ }\mbox{ }\mbox{ }\mbox{ }
$v \leftarrow \Subalg(h_{P},D(L[2][1]),L[2][1],L[2][2])$\\
\mbox{ }\mbox{ }\begin{footnotesize}30: \end{footnotesize}
\mbox{ }\mbox{ }\textbf{else}
\hspace{2.5cm}\begin{footnotesize}$\triangleright$ $m_i=1$ \end{footnotesize}\\
\mbox{ }\mbox{ }\begin{footnotesize}31: \end{footnotesize}
\mbox{ }\mbox{ }\mbox{ }\mbox{ }\textbf{if } $i \in S$ \textbf{then}\\
\mbox{ }\mbox{ }\begin{footnotesize}32: \end{footnotesize}
\mbox{ }\mbox{ }\mbox{ }\mbox{ }\mbox{ }\mbox{ }
$u \leftarrow h_{exc}$\\
\mbox{ }\mbox{ }\begin{footnotesize}33: \end{footnotesize}
\mbox{ }\mbox{ }\mbox{ }\mbox{ }\textbf{else}\\
\mbox{ }\mbox{ }\begin{footnotesize}34: \end{footnotesize}
\mbox{ }\mbox{ }\mbox{ }\mbox{ }\mbox{ }\mbox{ }
$u \leftarrow \Subalg(h_{P},D(L[2][1]),L[2][1],L[2][2])$\\
\mbox{ }\mbox{ }\begin{footnotesize}35: \end{footnotesize}
\mbox{ }\mbox{ }\mbox{ }\mbox{ }\textbf{end if}\\
\mbox{ }\mbox{ }\begin{footnotesize}36: \end{footnotesize}
\mbox{ }\mbox{ }\mbox{ }\mbox{ }$v \leftarrow D(L[2][2])$\\
\mbox{ }\mbox{ }\begin{footnotesize}37: \end{footnotesize}
\mbox{ }\mbox{ }\textbf{end if}\\
\mbox{ }\mbox{ }\begin{footnotesize}38: \end{footnotesize}
\mbox{ }\mbox { }Append $(u,v)$ to $L$
\hspace{0.5cm}\begin{footnotesize}$\triangleright$ $L=[(u_{i+2},v_{i+2}),(u_{i+1},v_{i+1}),(u_{i},v_{i})]$ \end{footnotesize}\\
\mbox{ }\mbox{ }\begin{footnotesize}39: \end{footnotesize}
\textbf{end for}\\
\mbox{ }\mbox{ }\begin{footnotesize}40: \end{footnotesize}
return $L[3][1]$\\
\hrule
\end{algorithm}

\begin{theorem}\label{thmonlad}
Algorithm 1 is correct.
\end{theorem}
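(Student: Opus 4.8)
The plan is to prove correctness of Algorithm 1 by showing it faithfully implements the recursive construction of Proposition~\ref{propalgo}, whose correctness has already been established. The heart of the matter is the invariant that, after the iteration for index $i$ completes, the last entry of the list $L$ equals $(u_i, v_i) = (h_{k_iP}, h_{(k_i+1)P})$, with $L$ holding the pairs for the three most recent indices. Since Proposition~\ref{propalgo} already guarantees that the recursively defined $u_i, v_i$ satisfy $u_i = h_{k_iP}$ and $v_i = h_{(k_i+1)P}$, the task reduces to a bookkeeping verification that each line of the algorithm computes exactly the quantity that the proposition prescribes, and that the list $L$ is maintained so that the required previous-step data is always available when accessed.

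\textbf{First} I would handle the base cases, lines $3$--$7$, checking that they set $L$ to $[(u_{\ell-1},v_{\ell-1})]=[(h_P,h_{2P})]$ and then append $(u_{\ell-2},v_{\ell-2})$, matching the first two bullets of Proposition~\ref{propalgo}; the early-return guards in lines $4$ and $7$ correctly handle $\ell-1=0$ and $\ell-2=0$. \textbf{Next} I would treat the main loop (lines $8$--$39$). Here the key correspondence is that, after line $23$ removes the oldest entry when $|L|=3$, the list reads $L=[(u_{i+2},v_{i+2}),(u_{i+1},v_{i+1})]$, so that $L[2]$ holds the step-$(i+1)$ pair and $L[1]$ holds the step-$(i+2)$ pair. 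With this indexing I would verify that lines $25$ and $36$ compute $D(u_{i+1})$ and $D(v_{i+1})$ as required, and that the general-case \Subalg call in lines $29$ and $34$ has exactly the argument list $\Subalg(h_P, D(u_{i+1}), u_{i+1}, v_{i+1})$ demanded by the proposition.

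\textbf{The delicate part} is the special-case branch (lines $9$--$22$), where $h_{exc}$ is computed \emph{before} the list is trimmed in line $23$, so the indices into $L$ refer to $L=[(u_{i+1},v_{i+1}),(u_{i+2},v_{i+2}),(u_{i+3},v_{i+3})]$ of length three. I would carefully match each of the four cases: when $m_{i+1}=m_{i+2}=1$, line $12$ must produce $\Subalg(h_{3P},D^2(u_{i+2}),h_{7P},D^3(u_{i+3}))$, so $L[2][1]=u_{i+2}$ and $L[1][1]=u_{i+3}$ must hold; the remaining three cases, lines $14$, $18$, and $20$, are checked against the corresponding rows of the table in Proposition~\ref{propalgo}, confirming that $L[1]$ and $L[2]$ index the step-$(i+3)$ and step-$(i+2)$ pairs respectively. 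One must also confirm that when $i\in S$ these three prior pairs are genuinely present, which follows because $k_i$ or $k_i+1\in\mathcal M$ forces $i<\ell-3$ (as $5,7\notin\mathcal M$), exactly the point noted in the proof of Proposition~\ref{propalgo}.

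\textbf{The main obstacle} I expect is precisely this off-by-one list indexing: $h_{exc}$ is computed against a length-three $L$ while the general-case updates operate against a length-two $L$ after the line-$23$ trim, and confirming that every subscript $L[1], L[2]$ points at the mathematically intended pair in both regimes is the error-prone step. Once the invariant $L[\,\text{last}\,]=(u_i,v_i)$ is verified to be preserved across one loop iteration in all branches, an induction on decreasing $i$ yields $L[3][1]=u_0=h_{k_0P}=h_{mP}$ at termination, so that line $40$ returns the correct value and Algorithm~1 is correct.
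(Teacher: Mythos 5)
Your proposal is correct and follows essentially the same route as the paper's proof: check the initialization in lines 3--7, establish the invariant that after the trim at line 23 the list is $L=[(u_{i+2},v_{i+2}),(u_{i+1},v_{i+1})]$ (with the step-$(i+2)$ and step-$(i+3)$ pairs still available before the trim when $i\in S$, which is guaranteed because $5,7\notin\mathcal{M}$ forces $\ell-3\notin S$), and then conclude from Proposition~\ref{propalgo}. The only blemish is that in one sentence you list the length-three $L$ in reverse order, but your actual index identifications ($L[1]$ is the step-$(i+3)$ pair, $L[2]$ the step-$(i+2)$ pair) are the correct ones, so the argument goes through as in the paper.
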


\begin{proof}
Correctness of lines $3-7$ is easy to check. 
Notice that, at the beginning of the cycle at line $8$, the list $L$ is $L=[(u_{\ell-1},v_{\ell-1}),(u_{\ell-2},v_{\ell-2})]$.
Moreover, one has that $\ell-3 \not \in S$, since $5,7 \not \in \mathcal{M}$, so we do not need to check whether $\ell-3\in S$. 
Observe now that for each $i$ from $i=\ell-3$ down to $i=0$, the list $L$ at line $23$ is $L=[(u_{i+2},v_{i+2}),(u_{i+1},v_{i+1})]$, while at line $38$ the list is 
$L=[(u_{i+2},v_{i+2}),(u_{i+1},v_{i+1}),(u_{i},v_{i})]$.
Hence correctness follows from Proposition~\ref{propalgo}.
\end{proof}

We now give an example of computation of a multiplication by $m$ for which the algorithm runs into the special cases.

\begin{example}\label{exmont} 
Let $q = 1021$ and $\mathbb{F}_{q^3}=\mathbb{F}_q[\zeta]/(\zeta^3-5)$. Let $E$ and $P$ be as in Example \ref{exPQ} and Example and \ref{exalg3}, i.e., let $E$ be the elliptic curve over $\mathbb{F}_q$ of equation $y^2=x^3+230x+191$ and let $P= (782\zeta^2 + 802\zeta + 45,979\zeta^2 + 299\zeta + 133)$. Let $m=644875$, with binary representation 
$$m=2^{19}+2^{16}+2^{15}+2^{14}+2^{12}+2^{10}+2^9+2^8+2^3+2+1.$$
For $i$ from $19$ to $0$ the pairs $(k_{i},k_{i}+1)$ are 
$$(1,2), (2,3), (4,5), (9,10), (19,20), (39,40), (78,79), (157,158), (314, 315), (629,630),$$
$$(1259,1260), (2519,2520), (5038,5039), (10076,10077), (20152,20153), (40304,40305),$$
$$(80609,80610), (161218,161219), (322437,322438),(644875,644876).$$
Hence the set of the special cases is $S=\{2,1\}$ since $k_{2}+1=161219$, $k_{1}=322437 \in \mathcal{M}$. We compute 
$h_{mP}=y-(105x+587)$ using Algorithm 1. At step $i=2$ we compute $v=h_{exc}$ with $m_3=1$ and $m_4=0$ (line $14$ of the algorithm). 
At step $i=1$ we compute $u=h_{exc}$ with $m_2=0$ and $m_3=1$ (line $18$ of the algorithm).
\end{example}

\subsection{The optimized algorithm for scalar multiplication}\label{combined}

In this subsection, we optimize the Montgomery-ladder style algorithm given in the previous subsection and give the conclusive algorithm to perform scalar multiplication in $T_3$ in optimal coordinates. 

\begin{remark}\label{remmeno}
Let $m$ be an integer modulo $p$. If $m>\frac{p-1}{2}$, one can reduce the computation of multiplication by $m$ to the computation of multiplication by $m'=-m \mod{p}$, with $m'\leq \frac{p-1}{2}$. One does so by using the equality $h_{-P}(x,y)=-h_P(x,-y)$.
\end{remark}

\paragraph{Frobenius reduction.} We now discuss how the Frobenius endomorphism can be used to increase the efficiency of our Montgomery-ladder-style algorithm for scalar multiplication. 

This strategy was first proposed by Koblitz in \cite{kob} for special elliptic curves and it has been applied to the group of $\mathbb{F}_{q^r}$-rational divisor classes of a hyperelliptic curve defined over $\mathbb{F}_q$ for $r>1$, see\cite[Section 15.1]{handbook}. The idea is splitting the computation of multiplication by $m$ in the computations of  several multiplications by smaller scalars. Such computations can be done in parallel, to obtain a faster scalar multiplication algorithm (see \cite[Section 15.1.2.d]{handbook}). 
In trace-zero subgroups, such a strategy enjoys the benefit of the extra property of the Frobenius on the trace, so that the operation can be further sped up. Hence computation in $T_n$ in the usual coordinates is faster than in the entire group, as shown in \cite[Section 15.3]{handbook}, \cite{ac07},\cite{TrZeroAvanzi}, \cite{langetzero}, \cite{nau99}, \cite{Weim}.

We now adapt this strategy to our scalar multiplication algorithm.
Let $m$ be an integer modulo $p$. One can write $m=m_0+sm_1$, with $m_0,m_1 \in \mathcal{O}(q)=\mathcal{O}(\sqrt{p})$, see the discussion in~\cite[Section 15.3.2]{handbook}. In order to compute $h_{mP}$ given $m$ and $h_P$, we call Algorithm 1 three times with input $m_0$, $m_1$ and $m_0+m_1$ respectively, instead of calling Algorithm 1 once with input $m$. Notice that $m_0,m_1,m_0+m_1 \in \mathcal{O}(\sqrt{p})$, while $m\in \mathcal{O}(p)$. Hence one reduces computation of the multiplication by $m$ %(which is of size at most $p$) 
to the computation of at most three multiplications by integers of smaller size. %(at most $\sqrt{p}$). 
Similarly to what we did in Algorithm 1, one needs to pay attention to the special cases where one cannot apply Subalgorithm 1.  

\begin{lemma}\label{excred}
Let $m$, $m_0$, $m_1$ be integers modulo $p$, with $m_0,m_1 \not = 0$. One has the following:
\begin{enumerate}
\item Subalgorithm 1 with input $h_P$, $h_{mP}$, $h_{(m+1)P}$, $h_{(s-1)P}$ correctly outputs $h_{(m+s)P}$ if $m\not \in \mathcal{A}_1$, where
$$\mathcal{A}_1=\left\{-2, s, \frac{-3(1+s)}{2+s},\frac{-3}{2+s},\frac{s+2}{s-1},\frac{-3}{2s+1} \mod{p}\right\}.$$
\item Subalgorithm 1 with input $h_{mP}$, $h_{-mP}$, $h_{(m+s)P}$, $h_{-(m+1)P}$ correctly outputs $h_{m(1-s)P}$ if $m \not \in \mathcal{A}_2$, where 
$$\mathcal{A}_2=\left\{ 1,s,\frac{s+2}{s-1},\frac{2s+1}{-3},\frac{1-s}{3s} \mod{p}\right\}.$$
\item Subalgorithm 1 with input $h_{m_0P}$, $h_{m_1P}$, $h_{(m_0+m_1)P}$, $h_{m_0(1-s)P}$ correctly outputs $h_{(m_0+sm_1)P}$ if $2m_0+m_1 \not = 0 \mod{p}$ and $s \not \in \mathcal{B}_1$, where
$$\begin{array}{r}
\mathcal{B}_1 =\left\{\left(\frac{3m_0+m_1}{m_1}\right)^{\pm 1},\left(\frac{m_1-m_0}{2m_0+m_1}\right)^{\pm 1},\frac{m_0+2m_1}{-(2m_0+m_1)},\frac{3m_0+2m_1}{-(3m_0+m_1)},\frac{2m_1}{-(3m_0+m_1)} \mod{p} : \right. \\
\left. 3m_0+m_1, 2m_0+m_1,m_1-m_0 \not = 0 \mod{p} \right\}.\end{array}$$ 
\item Subalgorithm 1 with input $h_{m_0P}$, $h_{m_1P}$, $h_{(m_0+m_1)P}$, $h_{m_1(s-1)P}$ correctly outputs $h_{(m_0+sm_1)P}$ if $m_0+2m_1 \not = 0 \mod{p}$ and $s \not \in \mathcal{B}_2$, where
$$\begin{array}{r}
\mathcal{B}_2 = \left\{\left(\frac{m_0+3m_1}{-(2m_0+3m_1)}\right)^{\pm 1},\left(\frac{m_0-m_1}{m_0+2m_1}\right)^{\pm 1},\frac{2m_0+3m_1}{-m_0},\frac{m_0+3m_1}{-2m_0} \mod{p} : \right.\\
\left. m_0+3m_1,2m_0+3m_1,m_0+2m_1,m_1-m_0 \not = 0 \mod{p} \right\}\end{array}.$$
\item Let $\Poly = \{t+1,t-1,t+2,t+3,
3t+1,
t^2+1,
t^2+t+1,
t^2+4t+2,
2t^2+t+1,$\\
\newline
$t^2-t-1,
2t^2+4t+1,t^2+4t+1,
t^2+2t+2,
t^2+3t+1,
t^2+t-1,
2t^2+2t+1,$\\
\newline
$t^2+3t+1,
t^2-2t-1,t^2+2t-1,
2t^2+3t-1,2t^2+3t+1\}\subseteq \mathbb{F}_p[t]$\\
\newline
and let $\mathcal{R}$ be the corresponding set of roots in $\mathbb{F}_p$:
$$\mathcal{R}=\{\alpha \in \mathbb{F}_p \mid  f(\alpha)=0 \mbox{ for some } f \in \Poly\}.$$
Then $s \in \mathcal{B}_1\cap \mathcal{B}_2$ if and only if $m_0=\alpha m_1$ for some $\alpha \in \mathcal{R}$. 
\end{enumerate}
\end{lemma}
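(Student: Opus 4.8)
The plan is to treat parts 1--4 by a single uniform mechanism and to reduce part 5 to an explicit elimination. For each of parts 1--4, Theorem~\ref{thsub} reduces correctness to the one combinatorial hypothesis of Notation~\ref{notret}, namely $h_1,h_2\notin\{k_1,k_2\}$: the two auxiliary factors produced from $S_1$ must differ from the two produced from $S_2$. So first I would, for the given quadruple of input lines, write out $S_1$ and $S_2$ using the formulas of Theorem~\ref{thformule}, single out the common target factor (the line to be returned), and express each remaining factor as $h_{cP}$ with $c$ an explicit $\Z[s]$-combination of $1,m$ (parts 1--2) or of $m_0,m_1$ (parts 3--4). This is just the bookkeeping of the nine sums $\varphi^i(\cdot)+\varphi^j(\cdot)$ grouped into their three Frobenius orbits, exactly as in the discussion preceding Subalgorithm~1.

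The crucial translation is that, since $P$ has order $p$ and $\varphi(P)=sP$, one has $h_{aP}=h_{bP}$ if and only if $a\equiv s^{\ell}b\pmod p$ for some $\ell\in\{0,1,2\}$. Hence every forbidden coincidence $h_i=k_j$ becomes a single scalar congruence, which I would solve for the relevant unknown---$m$ in parts 1--2, $s$ in parts 3--4---using (\ref{equazT3}) repeatedly in the form $s^2=-s-1$ and $s^3=1$ to eliminate higher powers of $s$. Solutions forcing $s$ to a value with $s^2+s+1\neq0$ (for instance $s=-2$) cannot occur on a genuine trace-zero curve and are discarded; the surviving solutions are precisely the listed sets $\mathcal{A}_1,\mathcal{A}_2$ and $\mathcal{B}_1,\mathcal{B}_2$. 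This is the same argument used to prove Lemma~\ref{m_ecc}, so parts 1--4 are routine once the factor scalars are recorded correctly.

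For part 5 I would first convert each membership condition into a condition on $\alpha:=m_0/m_1$. Solving, for each generator $b$ of $\mathcal{B}_1$, the equation $s=b(m_0,m_1)$ for $\alpha$ rewrites ``$s\in\mathcal{B}_1$'' as ``$\alpha\in A_1(s)$'' for an explicit finite set $A_1(s)$ of rational functions of $s$, and similarly ``$s\in\mathcal{B}_2$'' as ``$\alpha\in A_2(s)$''. Reducing every entry of $A_1(s),A_2(s)$ modulo $s^2+s+1$---here the inverses $(s+2)^{-1}=(1-s)/3$ and $(2s+1)^{-1}=-(2s+1)/3$ are convenient---puts both sets in normal form, and then $s\in\mathcal{B}_1\cap\mathcal{B}_2$ is equivalent to $\alpha\in A_1(s)\cap A_2(s)$. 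The intersection is computed pairwise: for entries $g_1(s)\in A_1(s)$, $g_2(s)\in A_2(s)$, the equation $g_1(s)=g_2(s)$, cleared of denominators and reduced modulo (\ref{equazT3}), is a polynomial relation in $\alpha$ over $\FF_p$; eliminating $s$ in this way is what produces the polynomials of $\Poly$, and $\alpha\in\mathcal{R}$ is by definition the assertion that one of them vanishes. The converse implication is then checked by exhibiting, for each factor of $\Poly$, a pair $(i,j)$ realizing the coincidence.

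The hard part will be the combinatorics and completeness of part~5, not any individual computation. On the order of $|A_1(s)|\cdot|A_2(s)|$ pairs must be examined, with three subtleties to control. First, $s^2+s+1=0$ has two roots $s$ and $s^2$, and a given $\alpha\in\mathcal{R}$ may place $s$ in both sets through different entries for the two roots, so one must verify that the common value $g_1(s)=g_2(s)$ is genuinely a cube root of unity and equals the actual $s$; this is exactly where (\ref{equazT3}) is used nontrivially, and it is what decides which relations survive. Second, the same polynomial---most conspicuously $t^2+t+1$, arising from the identities $b=s$ and $b=s^2$ among the entries---can be produced by several pairs, only some of which have cube-root common value, so membership of a factor in $\Poly$ must be certified by at least one valid pair. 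Third, the provisos $3m_0+m_1,\,2m_0+m_1,\,m_1-m_0\neq0$ and their part~4 analogues must be folded in: their vanishing removes entries from $A_1(s),A_2(s)$ and simultaneously accounts for the linear factors of $\Poly$, which correspond to the degenerate configurations (such as $m_0+m_1=0$, i.e. an undefined input line, or $m_0+2m_1=0$, where the part~4 variant ceases to apply). Keeping this case analysis organized, and confirming that $\Poly$ is neither redundant nor missing a factor, is the main obstacle; the eliminations themselves are short.
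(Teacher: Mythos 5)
Your proposal is correct and follows essentially the same route as the paper: the paper's own proof simply invokes Theorem~\ref{thsub} to reduce everything to the condition $h_1,h_2\notin\{k_1,k_2\}$ and then declares the rest a direct computation analogous to Lemma~\ref{m_ecc}, which is exactly the reduction-plus-elimination you describe (including the use of $s^2+s+1\equiv 0 \pmod p$ to normalize and to discard spurious solutions). Your elaboration of part~5 via the substitution $\alpha=m_0/m_1$ and pairwise elimination of $s$ is more explicit than anything in the paper, but it is the same computation.
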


\begin{proof}
Recall that Subalgorithm 1 requires the condition $h_1,h_2 \not \in \{k_1,k_2\}$ for the input lines, where we follow Notation \ref{notret}. The lemma then follows from Theorem \ref{thsub} by direct computation (the proof is analogous to that of Lemma \ref{m_ecc}). 
\end{proof}

\paragraph{Precomputation.} In order to apply Frobenius reduction to scalar multiplication, we need to be able to deal with the special cases of Lemma \ref{excred}. We chose to solve this problem by using Algorithm 1 to precompute the polynomials of the set
\begin{equation}\label{prec}
\mathcal{L}=\{h_{m(1-s)P} \mbox{ : } m \in \mathcal{A}_1\cup \mathcal{A}_2\} \cup \{ h_{(s+\alpha)P} \mbox{ : } \alpha \in \mathcal{R}\}.
\end{equation}
In order to compute the polynomials of the form $h_{m(1-s)P}$, we first compute $h_{(s-1)P} \in \mathcal{L}$, then call Algorithm 1 with input $h_{(1-s)P}$ and $m$. 

We are now ready to present our final algorithm for scalar multiplication in $T_3$. Recall that at the end of the cycle for in Algorithm 1, one has computed the pair $L[3]=(h_{mP},h_{(m+1)P})$. 

\begin{notation}
Write $Alg_1(h_P,m)$ for the pair $(h_{mP},h_{(m+1)P})$, computed with a modified version of Algorithm 1 that outputs the entire pair $L[3]$.
\end{notation}

\hrule
\begin{algorithm}[Scalar multiplication in $T_3$]
\textbf{ }\\
\hrule
\textbf{ }\\
\textbf{Input} : $h_P$, $m$ an integer modulo $p$.\\
\textbf{Output} : $h_{mP}$.\\
\hrule
\textbf{ }\\
\mbox{ }\mbox{ }\begin{footnotesize}1 : \end{footnotesize} $\mathcal{L} \leftarrow$ set \ref{prec} of precomputed lines\\
\mbox{ }\mbox{ }\begin{footnotesize}2 : \end{footnotesize} \textbf{if} $m > \frac{p-1}{2}$ \textbf{then} $\overline{m} \leftarrow -m \mod{p}$ \textbf{else} $\overline{m} \leftarrow m$ \textbf{end if}\\
\mbox{ }\mbox{ }\begin{footnotesize}3 : \end{footnotesize} $\overline{m} \leftarrow m_0+sm_1$\\
\mbox{ }\mbox{ }\begin{footnotesize}4 : \end{footnotesize} \textbf{if} $m_0=0$ \textbf{then} $h \leftarrow Alg_1(h_P,m_1)[1]$\\
\mbox{ }\mbox{ }\begin{footnotesize}5 : \end{footnotesize} \textbf{else if} $m_1=0$ \textbf{then} $h \leftarrow Alg_1(h_P,m_0)[1]$\\
\mbox{ }\mbox{ }\begin{footnotesize}6 : \end{footnotesize} \textbf{else}
\hspace{2.5cm}\begin{footnotesize}$\triangleright$ $m_0,m_1\not = 0$  \end{footnotesize}\\
\mbox{ }\mbox{ }\begin{footnotesize}7 : \end{footnotesize} \mbox{ }\mbox{ }\textbf{if} $s\in \mathcal{B}_1\cap\mathcal{B}_2$ \textbf{then} 
\hspace{2.5cm}\begin{footnotesize}$\triangleright$ $\overline{m}=m_1(s+\alpha)$ for some $\alpha \in \mathcal{R}$  \end{footnotesize}\\
\mbox{ }\mbox{ }\begin{footnotesize}8 : \end{footnotesize} \mbox{ }\mbox{ }\mbox{ }\mbox{ }$h\leftarrow Alg_1(h_{(s+\alpha)P},m_1)[1]$
\hspace{1.35cm}\begin{footnotesize}$\triangleright$ $h_{(s+\alpha)P} \in \mathcal{L}$  \end{footnotesize}\\
\mbox{ }\mbox{ }\begin{footnotesize}9 : \end{footnotesize} \mbox{ }\mbox{ }\textbf{else}
\hspace{3.2cm}\begin{footnotesize}$\triangleright$ $s \not \in \mathcal{B}_1 \cap \mathcal{B}_2$ \end{footnotesize}\\
\mbox{ }\mbox{ }\begin{footnotesize}10 : \end{footnotesize} \mbox{ }\mbox{ }\mbox{ } $h_{m_0P}\leftarrow Alg_1(h_P,m_0)[1]$\\
\mbox{ }\mbox{ }\begin{footnotesize}11 : \end{footnotesize} \mbox{ }\mbox{ }\mbox{ } $h_{m_1P}\leftarrow Alg_1(h_P,m_1)[1]$\\
\mbox{ }\mbox{ }\begin{footnotesize}12 : \end{footnotesize} \mbox{ }\mbox{ }\mbox{ } $h_{(m_0+m_1)P}\leftarrow Alg_1(h_P,m_0+m_1)[1]$\\
\mbox{ }\mbox{ }\begin{footnotesize}13 : \end{footnotesize} \mbox{ }\mbox{ }\mbox{ } \textbf{if} $s \not \in \mathcal{B}_1$ and $2m_0+m_1 \not = 0 \mod{p}$ \textbf{then} 
\hspace{0.75cm}\begin{footnotesize}$\triangleright$ Compute $h_{(m_0+sm_1)P}$ from $h_{m_0P}, h_{m_1P}, h_{(m_0+m_1)P}, h_{m_0(1-s)P}$  \end{footnotesize}\\
\mbox{ }\mbox{ }\begin{footnotesize}14 : \end{footnotesize} \mbox{ }\mbox{ }\mbox{ }\mbox{ }\mbox{ }\mbox{ } \textbf{if} $m_0 \not \in \mathcal{A}_1\cup \mathcal{A}_2$ \textbf{then}\\
\mbox{ }\mbox{ }\begin{footnotesize}15 : \end{footnotesize} \mbox{ }\mbox{ }\mbox{ }\mbox{ }\mbox{ }\mbox{ }\mbox{ }\mbox{ } $h_{(m_0+1)P} \leftarrow Alg_1(h_P,m_0)[2]$\\
\mbox{ }\mbox{ }\begin{footnotesize}16 : \end{footnotesize} \mbox{ }\mbox{ }\mbox{ }\mbox{ }\mbox{ }\mbox{ }\mbox{ }\mbox{ } $h_{(m_0+s)P} \leftarrow \Subalg(h_P,h_{m_0P}, h_{(m_0+1)P},h_{(s-1)P})$\\
\mbox{ }\mbox{ }\begin{footnotesize}17 : \end{footnotesize} \mbox{ }\mbox{ }\mbox{ }\mbox{ }\mbox{ }\mbox{ }\mbox{ }\mbox{ } $h_{m_0(1-s)P} \leftarrow \Subalg(h_{m_0P},h_{-m_0P}, h_{-(m_0+1)P},h_{(m_0+s)P})$\\
\mbox{ }\mbox{ }\begin{footnotesize}18 : \end{footnotesize} \mbox{ }\mbox{ }\mbox{ }\mbox{ }\mbox{ }\mbox{ } \textbf{end if}\\
\mbox{ }\mbox{ }\begin{footnotesize}19 : \end{footnotesize} \mbox{ }\mbox{ }\mbox{ }\mbox{ }\mbox{ }\mbox{ } $h \leftarrow \Subalg(h_{m_0P},h_{m_1P}, h_{(m_0+m_1)P},h_{m_0(1-s)P})$\\
\mbox{ }\mbox{ }\begin{footnotesize}20 : \end{footnotesize} \mbox{ }\mbox{ }\mbox{ } \textbf{else}
\hspace{1.25cm}\begin{footnotesize}$\triangleright$ $s \not \in \mathcal{B}_2$ and $m_0+2m_1 \not = 0 \mod{p}$: Compute $h_{(m_0+sm_1)P}$ from $h_{m_0P}, h_{m_1P}, h_{(m_0+m_1)P}, h_{m_1(s-1)P}$  \end{footnotesize}\\
\mbox{ }\mbox{ }\begin{footnotesize}21 : \end{footnotesize} \mbox{ }\mbox{ }\mbox{ }\mbox{ }\mbox{ }\mbox{ } \textbf{if} $m_1 \not \in \mathcal{A}_1\cup \mathcal{A}_2$ \textbf{then}\\
\mbox{ }\mbox{ }\begin{footnotesize}22 : \end{footnotesize} \mbox{ }\mbox{ }\mbox{ }\mbox{ }\mbox{ }\mbox{ }\mbox{ }\mbox{ } $h_{(m_1+1)P} \leftarrow Alg_1(h_P,m_1)[2]$\\
\mbox{ }\mbox{ }\begin{footnotesize}23 : \end{footnotesize} \mbox{ }\mbox{ }\mbox{ }\mbox{ }\mbox{ }\mbox{ }\mbox{ }\mbox{ } $h_{(m_1+s)P} \leftarrow \Subalg(h_P,h_{m_1P}, h_{(m_1+1)P},h_{(s-1)P})$\\
\mbox{ }\mbox{ }\begin{footnotesize}24 : \end{footnotesize} \mbox{ }\mbox{ }\mbox{ }\mbox{ }\mbox{ }\mbox{ }\mbox{ }\mbox{ } $h_{m_1(1-s)P} \leftarrow \Subalg(h_{m_1P},h_{-m_1P}, h_{-(m_1+1)P},h_{(m_1+s)P})$\\
\mbox{ }\mbox{ }\begin{footnotesize}25 : \end{footnotesize} \mbox{ }\mbox{ }\mbox{ }\mbox{ }\mbox{ }\mbox{ } \textbf{end if}\\
\mbox{ }\mbox{ }\begin{footnotesize}26 : \end{footnotesize} \mbox{ }\mbox{ }\mbox{ }\mbox{ }\mbox{ }\mbox{ } $h \leftarrow \Subalg(h_{m_0P},h_{m_1P}, h_{(m_0+m_1)P},h_{m_1(s-1)P})$\\
\mbox{ }\mbox{ }\begin{footnotesize}27 : \end{footnotesize} \mbox{ }\mbox{ }\mbox{ } \textbf{end if}\\
\mbox{ }\mbox{ }\begin{footnotesize}28 : \end{footnotesize} \mbox{ }\mbox{ }\textbf{if} $m > \frac{p-1}{2}$ \textbf{then return} $-h(x,-y)$ \textbf{else return} $h$ \textbf{end if}\\
\hrule
\end{algorithm}

\begin{theorem}
Algorithm 2 is correct. 
\end{theorem}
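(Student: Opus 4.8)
The plan is to verify correctness of Algorithm 2 by tracing each branch of the conditional structure and checking that the value assigned to $h$ is indeed $h_{\overline{m}P}$, and that the final sign adjustment in line $28$ then recovers $h_{mP}$. The backbone of the argument is the Frobenius decomposition $\overline{m}=m_0+sm_1 \mod{p}$, together with the fact, established earlier in the excerpt, that $\varphi(P)=sP$; thus $h_{(m_0+sm_1)P}=h_{m_0P+\varphi(m_1P)}$ is precisely the output of Subalgorithm 1 applied to appropriate input lines, and $Alg_1(h_P,\cdot)$ correctly returns the pair $(h_{\cdot P},h_{(\cdot+1)P})$ by Theorem~\ref{thmonlad}. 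First I would dispose of the degenerate branches: lines $4$ and $5$ handle $m_0=0$ or $m_1=0$, where $\overline{m}P$ reduces to $sm_1P=\varphi(m_1P)$ or to $m_0P$, and the claim follows immediately since $h_Q$ and $h_{\varphi(Q)}$ coincide (the representation identifies a point with its Frobenius conjugates).

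Next I would treat the main branch $m_0,m_1\neq 0$. The key identity is that Subalgorithm 1 with input $h_{m_0P},h_{m_1P},h_{(m_0+m_1)P},h_{m_0(1-s)P}$ outputs $h_{(m_0+sm_1)P}$ precisely when $m_0\notin\mathcal{A}_1\cup\mathcal{A}_2$, $2m_0+m_1\neq 0$, and $s\notin\mathcal{B}_1$, by parts $1$, $2$, and $3$ of Lemma~\ref{excred}; symmetrically, the input $h_{m_0P},h_{m_1P},h_{(m_0+m_1)P},h_{m_1(s-1)P}$ works under the conditions $m_1\notin\mathcal{A}_1\cup\mathcal{A}_2$, $m_0+2m_1\neq 0$, and $s\notin\mathcal{B}_2$, by parts $1$, $2$, and $4$. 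The decomposition $m_0(1-s)P=m_0P-\varphi(m_0P)$ and the chain of calls in lines $16$--$17$ (respectively $23$--$24$) produce the auxiliary line $h_{m_0(1-s)P}$ via parts $1$ and $2$ of the lemma, so I would check that the guard $m_0\notin\mathcal{A}_1\cup\mathcal{A}_2$ in line $14$ (respectively $m_1\notin\mathcal{A}_1\cup\mathcal{A}_2$ in line $21$) is exactly what Lemma~\ref{excred} requires for those two intermediate Subalgorithm calls to succeed, and that $h_{(s-1)P},h_{(1-s)P}\in\mathcal{L}$ are available by precomputation.

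The crux of the correctness argument is that the branching on lines $7$, $13$, and $20$ covers all cases and never routes a computation into a Subalgorithm call whose hypotheses fail. Here I would invoke part $5$ of Lemma~\ref{excred}: the set $\mathcal{B}_1\cap\mathcal{B}_2$ is nonempty only when $m_0=\alpha m_1$ for some $\alpha\in\mathcal{R}$, in which case $\overline{m}=m_1(s+\alpha)$ and the precomputed line $h_{(s+\alpha)P}\in\mathcal{L}$ lets line $8$ finish via a single call to $Alg_1(h_{(s+\alpha)P},m_1)$. When $s\notin\mathcal{B}_1\cap\mathcal{B}_2$ we have $s\notin\mathcal{B}_1$ or $s\notin\mathcal{B}_2$, so at least one of the two symmetric branches at lines $13$ and $20$ is applicable; I would check that the auxiliary non-vanishing conditions $2m_0+m_1\neq 0$ and $m_0+2m_1\neq 0$ are correctly folded into the guards so that whenever line $13$ is skipped, the hypotheses of the else-branch at line $20$ genuinely hold. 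Finally I would confirm line $28$: since $h_{-Q}(x,y)=-h_Q(x,-y)$, the transformation $h\mapsto -h(x,-y)$ converts $h_{\overline{m}P}=h_{-mP}$ back into $h_{mP}$ when $m>\frac{p-1}{2}$, as noted in Remark~\ref{remmeno}. I expect the main obstacle to be the case-exhaustiveness bookkeeping of the third paragraph: one must confirm that no residual choice of $s$ simultaneously violates the hypotheses of \emph{both} symmetric branches while also escaping the $\mathcal{B}_1\cap\mathcal{B}_2$ precomputation branch, which is exactly what part $5$ of Lemma~\ref{excred} is designed to rule out.
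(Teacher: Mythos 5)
Your overall plan coincides with the paper's proof: branch-by-branch verification, with Theorem~\ref{thmonlad} covering the calls to Algorithm 1, Theorem~\ref{thsub} and Lemma~\ref{excred} covering the calls to Subalgorithm 1, and Remark~\ref{remmeno} covering line 28. The degenerate branches, the $\mathcal{B}_1\cap\mathcal{B}_2$ branch with the precomputed lines $h_{(s+\alpha)P}$, and the role of the set $\mathcal{L}$ when $m_0\in\mathcal{A}_1\cup\mathcal{A}_2$ are all treated as in the paper.

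There is, however, one concrete gap at exactly the point you flag as the main obstacle, and the tool you propose there does not close it. When the guard of line 13 fails, it may fail either because $s\in\mathcal{B}_1$ or because $2m_0+m_1\equiv 0\pmod p$; for the else-branch at line 20 to be legitimate you must show that in either case one has both $s\notin\mathcal{B}_2$ \emph{and} $m_0+2m_1\not\equiv 0\pmod p$. Part 5 of Lemma~\ref{excred} only characterizes when $s\in\mathcal{B}_1\cap\mathcal{B}_2$; it says nothing about the residual combinations ($s\in\mathcal{B}_1\setminus\mathcal{B}_2$ together with $m_0+2m_1\equiv 0$, or $2m_0+m_1\equiv 0$ together with $s\in\mathcal{B}_2$). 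Indeed, since the defining conditions of $\mathcal{B}_2$ already require $m_0+2m_1\not\equiv 0$, the combination $s\in\mathcal{B}_1$ with $m_0+2m_1\equiv 0$ automatically has $s\notin\mathcal{B}_1\cap\mathcal{B}_2$ and so escapes the line-7 branch; part 5 cannot be what rules it out. The paper closes this with a separate direct computation: substituting $m_0\equiv -2m_1$ into the membership conditions for $\mathcal{B}_1$ forces $s\in\{0,-5^{\pm 1},-3,-1,-4/5,2/5\}$, which contradicts $s^2+s+1\equiv 0\pmod p$ for $p$ of cryptographic size, and the case $2m_0+m_1\equiv 0$ is dispatched by the same argument. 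Without some such computation the exhaustiveness of the branching, and hence the correctness of line 26, is not established.
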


\begin{proof} 
Let $\overline{m}$ be as in line $3$ of the algorithm. If $m_0=0$ as in line $4$, or $m_1=0$ as in line $5$, then $h=h_{\overline{m}P}$ by Theorem \ref{thmonlad}. 

Assume now that $m_0, m_1 \not =0$, as in line $6$. If $s \in \mathcal{B}_1\cap \mathcal{B}_2$ as in line $7$, then by Lemma \ref{excred}.5 $\overline{m}=m_1(s+\alpha)$ for some $\alpha \in \mathcal{R}$. In addition, $h_{(s+\alpha)P} \in \mathcal{L}$, where $\mathcal{L}$ is the set of precomputed polynomials of line $1$, defined in (\ref{prec}). Hence, by Theorem \ref{thmonlad}, one can compute $h=h_{\overline{m}P}$ as in line $8$ of  the algorithm.

Now consider the case in which $s \not \in \mathcal{B}_1 \cap \mathcal{B}_2$, as in line $9$ of the algorithm. Correctness of lines $10$, $11$ and $12$ follows from Theorem \ref{thmonlad}.

In line $13$ we have $s \not \in \mathcal{B}_1$ and $2m_0+m_1\not = 0 \mod{p}$. Then, by Lemma \ref{excred}.3, one can compute $h_{\overline{m}P}=h_{(m_0+sm_1)P}$ using Subalgorithm 1 with input lines $h_{m_0P}$, $h_{m_1P}$, $h_{(m_0+m_1)P}$ and $h_{m_0(1-s)P}$. 
We have already computed $h_{m_0P}$, $h_{m_1P}$, $h_{(m_0+m_1)P}$ in lines $10-12$. 
Hence,  in order to be able to compute $h_{\overline{m}P}$ with Subalgorithm 1, we still need to compute $h_{m_0(1-s)P}$, see also Lemma \ref{excred}.3. 

If $m_0 \not \in \mathcal{A}_1 \cup \mathcal{A}_2$ as in line $14$, then one computes $h_{m_0(1-s)P}$ as in lines $15-17$ of the Algorithm, by Theorem \ref{thsub}, Theorem \ref{thmonlad} and Lemma \ref{excred}, points 1 and 2.
If $m_0 \in \mathcal{A}_1\cup \mathcal{A}_2$, then we cannot compute the polynomial $h_{m_0(1-s)P}$ as we do in lines $15-17$ of the algorithm. Nevertheless, in this case, $h_{m_0(1-s)P}$ belongs to the set $\mathcal{L}$ of precomputed polynomials, by construction of $\mathcal{L}$. 
Therefore, in both cases Subalgorithm 1 in line $19$ correctly computes $h=h_{\overline{m}P}$, by Theorem \ref{thsub}. 

Now consider lines $20-26$ of the algorithm. We have either $s \in \mathcal{B}_1$ or $2m_0+m_1 = 0 \mod{p}$. 
Suppose first that $s \in \mathcal{B}_1$. Then $s \not \in \mathcal{B}_2$ , since $s \not \in \mathcal{B}_1\cap\mathcal{B}_2$.
Moreover, if $s \in \mathcal{B}_1$ then $m_0+2m_1 \not = 0 \mod{p}$. 
In fact, one can check by direct computation that $s \in \mathcal{B}_1$ and $m_0+2m_1 = 0 \mod{p}$ implies $s \in \{0,-5^{\pm 1},-3,-1,-4/5,2/5 \mod{p}\}$, since $m_0,m_1 \not = 0 \mod{p}$, which contradicts the equality $s^2+s+1=0 \mod{p}$. 
Now suppose that $2m_0+m_1 = 0 \mod{p}$. By the same arguments as above, one has that $2m_0+m_1 = 0 \mod{p}$ implies $s \not \in \mathcal{B}_2$ and $m_0+2m_1 \not = 0 \mod{p}$.
Hence, in both cases considered in line $20$, we have that $s \not \in \mathcal{B}_2$ and $m_0+2m_1\not = 0 \mod{p}$, and one can compute $h_{(m_0+sm_1)P}$ as in line $26$ by Lemma \ref{excred}, 4.

Similar arguments show that lines $21-25$ of the algorithm are correct, so Subalgorithm 1 at line $26$ correctly outputs $h=h_{\overline{m}P}$.
From line $2$, we have that $h=h_{\overline{m}P}=h_{-mP}$ if $m>\frac{p-1}{2}$, and $h=h_{\overline{m}P}=h_{mP}$ otherwise. Hence the algorithm correctly outputs $h_{mP}$ in line $28$ by Remark \ref{remmeno}. 
\end{proof}

\begin{remark}
The aim of Algorithm 2 is showing how to apply Frobenius reduction in order to speed up our scalar multiplication algorithm. 
However, further optimizations are possible. For example, one can introduce variations of Subalgorithm 1 in order to reduce the number of precomputed lines. 
\end{remark}

In conclusion, we give an example of optimized computation following with Algorithm 2. 

\begin{example}
Let $q = 1021$ and $\mathbb{F}_{q^3}=\mathbb{F}_q[\zeta]/(\zeta^3-5)$. Let $E$ and $P$ be as in Example \ref{exPQ}, Example \ref{exalg3}, and Example \ref{exmont}, i.e., let $E$ be the elliptic curve over $\mathbb{F}_q$ of equation $y^2=x^3+230x+191$ and let $P= (782\zeta^2 + 802\zeta + 45,979\zeta^2 + 299\zeta + 133)$. 
Write $m=483925=m_0+sm_1$, where $m_0=274$ and $m_1=3$. 

Algorithm 1 computes $h_{mP}$ by calling Subalgorithm 1 seventeen times with input $h_{m_1P},h_{m_2P},h_{n_1P},$ $h_{n_2P}$ for the following values of $(m_1,m_2,n_1,n_2)$: 
$$(1,6,3,4), (1,14,7,8), (1,28,14,15),(1,58,29,30), (1,118,59,60), (1,236,118,119),$$ 
$$(1,472,236,237),(1,944,472,473), (1,1890,945,946),(1,3780,1890,1891),$$
$$(1,7560,3780,3781), (1,15122,7561,7562),(1,30244,15122,15123),(1,60490,30245,30246),$$ 
$$(1,120980,60490,60491),(1,241962,120981,120982),(1,483924,241962,241963).$$ 

Performing the same computation with Algorithm 2, one has that $$s \not \in\mathcal{B}_1=\{275,757679,717376,508804,304004,263701,527404\},\  2m_0+m_1 \not = 0 \mod{p}$$ and $$m_0 \not \in \mathcal{A}_1\cup \mathcal{A}_2=\{1021379,860162,161216,860163,322435,161217,232982,627181\}.$$
Hence, after computing $h_{m_0P}$, $h_{(m_0+1)P}$, $h_{m_1P}$, $h_{(m_0+m_1)P}$, Algorithm 2 calls Subalgorithm 1 three times (in lines $16$, $17$ and $19$) in order to compute $h_{mP}$.  To compute $h_{m_0P}$ and $h_{(m_0+1)P}$, Algorithm 1 calls Subalgorithm 1 with input $h_{m_1P},h_{m_2P},h_{n_1P},h_{n_2P}$ for the following values of $(m_1,m_2,n_1,n_2)$:
$$(1,4,2,3), (1,8,4,5), (1,16,8,9),(1,34,17,18),(1,68,34,35), (1,136,68,69),(1,274,137,138).$$ 
To compute $h_{(m_0+m_1)P}$, Algorithm 1 calls Subalgorithm 1 with input $h_{m_1P},h_{m_2P},h_{n_1P},h_{n_2P}$ for the following values of $(m_1,m_2,n_1,n_2)$:
$$(1,4,2,3), (1,8,4,5), (1,16,8,9),(1,34,17,18),(1,68,34,35), (1,138,69,70),(1,276,138,139).$$
Hence in total, taking into account overlapping in the computation of $h_{m_0P}$ and $h_{(m_0+m_1)P}$, Algorithm 2 calls Subalgorithm 1 only twelve times. 
\end{example}

%%% Appendice

\appendix
\section{Explicit formulas}
{$\textbf{(1)}$ Formulas for the coefficients of $S_{P,Q}$ in terms of the coefficients of $h_P$ and $h_Q$.\\
\newline
$a_4 = -\alpha_1^3\beta_1\beta_0^2 - 3B\alpha_1^3\beta_1 + 2A\alpha_1^3\beta_0 + 2\alpha_1^2\alpha_0\beta_1^2\beta_0 + A\alpha_1^2\alpha_0\beta_1 - 6B\alpha_1^2\beta_1^2 + 3A\alpha_1^2\beta_1\beta_0 + A^2\alpha_1^2 - \alpha_1\alpha_0^2\beta_1^3 + 6\alpha_1\alpha_0^2\beta_0 + 3A\alpha_1\alpha_0\beta_1^2 + 3\alpha_1\alpha_0\beta_0^2 + 9B\alpha_1\alpha_0 - 3B\alpha_1\beta_1^3 + A\alpha_1\beta_1^2\beta_0 + 2A^2\alpha_1\beta_1 - 3\alpha_1\beta_0^3 + 9B\alpha_1\beta_0 - 3\alpha_0^3\beta_1 + 3\alpha_0^2\beta_1\beta_0 - 3A\alpha_0^2 + 2A\alpha_0\beta_1^3 + 6\alpha_0\beta_1\beta_0^2 + 9B\alpha_0\beta_1 - 6A\alpha_0\beta_0 + A^2\beta_1^2 + 9B\beta_1\beta_0 - 3A\beta_0^2$\\
\newline
$a_3 = 4B\alpha_1^3\beta_1^3 - 2A\alpha_1^3\beta_1^2\beta_0 + A^2\alpha_1^3\beta_1 - \alpha_1^3\beta_0^3 + 9B\alpha_1^3\beta_0 - 2A\alpha_1^2\alpha_0\beta_1^3 - \alpha_1^2\alpha_0\beta_1\beta_0^2 + 3B\alpha_1^2\alpha_0\beta_1 - 7A\alpha_1^2\alpha_0\beta_0 + A^2\alpha_1^2\beta_1^2 - 6B\alpha_1^2\beta_1\beta_0 + 
3A\alpha_1^2\beta_0^2 + 6AB\alpha_1^2 - \alpha_1\alpha_0^2\beta_1^2\beta_0 + A\alpha_1\alpha_0^2\beta_1 - 6B\alpha_1\alpha_0\beta_1^2 + 12A\alpha_1\alpha_0\beta_1\beta_0 -
8A^2\alpha_1\alpha_0 + A^2\alpha_1\beta_1^3 + 3B\alpha_1\beta_1^2\beta_0 + A\alpha_1\beta_1\beta_0^2 - 6AB\alpha_1\beta_1 + 4A^2\alpha_1\beta_0 - 
\alpha_0^3\beta_1^3 - 3\alpha_0^3\beta_0 + 3A\alpha_0^2\beta_1^2 + 21\alpha_0^2\beta_0^2 - 18B\alpha_0^2 + 9B\alpha_0\beta_1^3 - 
7A\alpha_0\beta_1^2\beta_0 + 4A^2\alpha_0\beta_1 - 3\alpha_0\beta_0^3 + 18B\alpha_0\beta_0 + 6AB\beta_1^2 - 8A^2\beta_1\beta_0 - 18B\beta_0^2 
+ 4A^3 + 27B^2$\\
\newline
$a_2 = -A^2\alpha_1^3\beta_1^3 - 2A\alpha_1^3\beta_1\beta_0^2 - 6AB\alpha_1^3\beta_1 + A^2\alpha_1^3\beta_0 - 2A\alpha_1^2\alpha_0\beta_1^2\beta_0 + 5A^2\alpha_1^2\alpha_0\beta_1 - 3\alpha_1^2\alpha_0\beta_0^3 - 9B\alpha_1^2\alpha_0\beta_0 + 6AB\alpha_1^2\beta_1^2 + 9B\alpha_1^2\beta_0^2 + (2A^3+ 27B^2)\alpha_1^2 - 2A\alpha_1\alpha_0^2\beta_1^3 - 3\alpha_1\alpha_0^2\beta_1\beta_0^2 + 9B\alpha_1\alpha_0^2\beta_1 + 9A\alpha_1\alpha_0^2\beta_0 + 36B\alpha_1\alpha_0\beta_1\beta_0 - 12A\alpha_1\alpha_0\beta_0^2 - 18AB\alpha_1\alpha_0 - 6AB\alpha_1\beta_1^3 + 5A^2\alpha_1\beta_1^2\beta_0 + 9B\alpha_1\beta_1\beta_0^2 + (4A^3 - 27B^2)\alpha_1\beta_1 - 3A\alpha_1\beta_0^3 + 36AB\alpha_1\beta_0 - 3\alpha_0^3\beta_1^2\beta_0 - 3A\alpha_0^3\beta_1 + 9B\alpha_0^2\beta_1^2 - 12A\alpha_0^2\beta_1\beta_0 + 6A^2\alpha_0^2 + A^2\alpha_0\beta_1^3 - 9B\alpha_0\beta_1^2\beta_0 + 9A\alpha_0\beta_1\beta_0^2 + 36AB\alpha_0\beta_1 - 24A^2\alpha_0\beta_0 + (2A^3 + 27B^2)\beta_1^2 - 18AB\beta_1\beta_0 + 6A^2\beta_0^2$\\
\newline
$a_1 = -A^2\alpha_1^3\beta_1^2\beta_0 - 4B\alpha_1^3\beta_1\beta_0^2 + (A^3 - 12B^2)\alpha_1^3\beta_1 + 8AB\alpha_1^3\beta_0 - 
A^2\alpha_1^2\alpha_0\beta_1^3 - 4B\alpha_1^2\alpha_0\beta_1^2\beta_0 + 16AB\alpha_1^2\alpha_0\beta_1 - 3A^2\alpha_1^2\alpha_0\beta_0 + (-3A^3 + 
12B^2)\alpha_1^2\beta_1^2 - 24AB\alpha_1^2\beta_1\beta_0 + 3A^2\alpha_1^2\beta_0^2 - 2A^2B\alpha_1^2 - 4B\alpha_1\alpha_0^2\beta_1^3 -
3A^2\alpha_1\alpha_0^2\beta_1 - 3\alpha_1\alpha_0^2\beta_0^3 + 15B\alpha_1\alpha_0^2\beta_0 - 24AB\alpha_1\alpha_0\beta_1^2 + 12A^2\alpha_1\alpha_0\beta_1\beta_0 - 
6B\alpha_1\alpha_0\beta_0^2 - 18B^2\alpha_1\alpha_0 + (A^3 - 12B^2)\alpha_1\beta_1^3 + 16AB\alpha_1\beta_1^2\beta_0 - 3A^2\alpha_1\beta_1\beta_0^2
+ 14A^2B\alpha_1\beta_1 - 3B\alpha_1\beta_0^3 + 63B^2\alpha_1\beta_0 - 3\alpha_0^3\beta_1\beta_0^2 - 3B\alpha_0^3\beta_1 - 3A\alpha_0^3\beta_0 + 
3A^2\alpha_0^2\beta_1^2 - 6B\alpha_0^2\beta_1\beta_0 + 9A\alpha_0^2\beta_0^2 + 6AB\alpha_0^2 + 8AB\alpha_0\beta_1^3 - 
3A^2\alpha_0\beta_1^2\beta_0 + 15B\alpha_0\beta_1\beta_0^2 + 63B^2\alpha_0\beta_1 - 3A\alpha_0\beta_0^3 - 42AB\alpha_0\beta_0 - 2A^2B\beta_1^2
- 18B^2\beta_1\beta_0 + 6AB\beta_0^2 + 4A^4 + 27AB^2$\\
\newline
$a_0 = 2A^2B\alpha_1^3\beta_1 - A^3\alpha_1^3\beta_0 - A^2\alpha_1^2\alpha_0\beta_1^2\beta_0 - 4B\alpha_1^2\alpha_0\beta_1\beta_0^2 + 12B^2\alpha_1^2\alpha_0\beta_1 -4AB\alpha_1^2\alpha_0\beta_0 - 5A^2B\alpha_1^2\beta_1^2 + (A^3 - 24B^2)\alpha_1^2\beta_1\beta_0 + 6AB\alpha_1^2\beta_0^2 + (A^4 + 6AB^2)\alpha_1^2 - 4B\alpha_1\alpha_0^2\beta_1^2\beta_0 + 2A\alpha_1\alpha_0^2\beta_1\beta_0^2 - 2AB\alpha_1\alpha_0^2\beta_1 - A^2\alpha_1\alpha_0^2\beta_0+ (A^3 - 24B^2)\alpha_1\alpha_0\beta_1^2 + 24AB\alpha_1\alpha_0\beta_1\beta_0 - 3A^2\alpha_1\alpha_0\beta_0^2 + A^2B\alpha_1\alpha_0 + 2A^2B\alpha_1\beta_1^3 + 12B^2\alpha_1\beta_1^2\beta_0 - 2AB\alpha_1\beta_1\beta_0^2 + (-2A^4 - 6AB^2)\alpha_1\beta_1 - 
5A^2B\alpha_1\beta_0 - \alpha_0^3\beta_0^3 - 3B\alpha_0^3\beta_0 + 6AB\alpha_0^2\beta_1^2 - 3A^2\alpha_0^2\beta_1\beta_0 + 3B\alpha_0^2\beta_0^2 
+ (A^3 + 9B^2)\alpha_0^2 - A^3\alpha_0\beta_1^3 - 4AB\alpha_0\beta_1^2\beta_0 - A^2\alpha_0\beta_1\beta_0^2 - 5A^2B\alpha_0\beta_1 - 
3B\alpha_0\beta_0^3 + (2A^3 - 9B^2)\alpha_0\beta_0 + (A^4 + 6AB^2)\beta_1^2 + A^2B\beta_1\beta_0 + (A^3 + 
9B^2)\beta_0^2 + 4A^3B + 27B^3$\\
\newline
$b_3 = \alpha_1^3\beta_0^2 - B\alpha_1^3 - 2\alpha_1^2\alpha_0\beta_1\beta_0 + A\alpha_1^2\alpha_0 + \alpha_1^2\beta_1\beta_0^2 - 3B\alpha_1^2\beta_1 + A\alpha_1^2\beta_0 + \alpha_1\alpha_0^2\beta_1^2 - 2\alpha_1\alpha_0\beta_1^2\beta_0 + 2A\alpha_1\alpha_0\beta_1 - 3B\alpha_1\beta_1^2 + 2A\alpha_1\beta_1\beta_0 + \alpha_0^3 + \alpha_0^2\beta_1^3 + 3\alpha_0^2\beta_0 + A\alpha_0\beta_1^2 + 3\alpha_0\beta_0^2 - B\beta_1^3 + A\beta_1^2\beta_0 + \beta_0^3$\\
\newline
$b_2 = A^2\alpha_1^3 + 3\alpha_1^2\alpha_0\beta_0^2 + 9B\alpha_1^2\alpha_0 + 3A^2\alpha_1^2\beta_1 + 3\alpha_1^2\beta_0^3 + 9B\alpha_1^2\beta_0 - 6\alpha_1\alpha_0^2\beta_1\beta_0 - 3A\alpha_1\alpha_0^2 - 6\alpha_1\alpha_0\beta_1\beta_0^2 + 18B\alpha_1\alpha_0\beta_1 - 6A\alpha_1\alpha_0\beta_0 + 3A^2\alpha_1\beta_1^2 + 18B\alpha_1\beta_1\beta_0 - 3A\alpha_1\beta_0^2 + 3\alpha_0^3\beta_1^2 + 3\alpha_0^2\beta_1^2\beta_0 - 3A\alpha_0^2\beta_1 + 9B\alpha_0\beta_1^2 - 6A\alpha_0\beta_1\beta_0 + A^2\beta_1^3 + 9B\beta_1^2\beta_0 - 3A\beta_1\beta_0^2$\\
\newline
$b_1 = -A^2\alpha_1^3\beta_1^2 - 2A\alpha_1^3\beta_0^2 + 2AB\alpha_1^3 - 12B\alpha_1^2\alpha_0\beta_1^2 + 4A\alpha_1^2\alpha_0\beta_1\beta_0 - 3A^2\alpha_1^2\alpha_0 - A^2\alpha_1^2\beta_1^3 - 12B\alpha_1^2\beta_1^2\beta_0 + 4A\alpha_1^2\beta_1\beta_0^2 + A^2\alpha_1^2\beta_0 + 4A\alpha_1\alpha_0^2\beta_1^2 - 3\alpha_1\alpha_0^2\beta_0^2 - 9B\alpha_1\alpha_0^2 + 4A\alpha_1\alpha_0\beta_1^2\beta_0 + 2A^2\alpha_1\alpha_0\beta_1 + 6\alpha_1\alpha_0\beta_0^3 + 18B\alpha_1\alpha_0\beta_0 + 2A^2\alpha_1\beta_1\beta_0 + 9B\alpha_1\beta_0^2 + (4A^3 + 27B^2)\alpha_1 + 6\alpha_0^3\beta_1\beta_0 + A\alpha_0^3 - 2A\alpha_0^2\beta_1^3 - 3\alpha_0^2\beta_1\beta_0^2 + 9B\alpha_0^2\beta_1 - 9A\alpha_0^2\beta_0 + A^2\alpha_0\beta_1^2 + 18B\alpha_0\beta_1\beta_0 - 9A\alpha_0\beta_0^2 + 2AB\beta_1^3 - 3A^2\beta_1^2\beta_0 - 9B\beta_1\beta_0^2 + (4A^3 + 27B^2)\beta_1 + A\beta_0^3$\\
\newline
$b_0 = -2A^2\alpha_1^3\beta_1\beta_0 - 8B\alpha_1^3\beta_0^2 + (A^3 + 8B^2)\alpha_1^3 + A^2\alpha_1^2\alpha_0\beta_1^2 - 8B\alpha_1^2\alpha_0\beta_1\beta_0 + 6A\alpha_1^2\alpha_0\beta_0^2 - 2AB\alpha_1^2\alpha_0 + A^2\alpha_1^2\beta_1^2\beta_0 + 4B\alpha_1^2\beta_1\beta_0^2 + (-A^3 - 12B^2)\alpha_1^2\beta_1 + 4AB\alpha_1^2\beta_0 + 4B\alpha_1\alpha_0^2\beta_1^2 + A^2\alpha_1\alpha_0^2 - 2A^2\alpha_1\alpha_0\beta_1^3 - 8B\alpha_1\alpha_0\beta_1^2\beta_0 + 8AB\alpha_1\alpha_0\beta_1 - 6A^2\alpha_1\alpha_0\beta_0 + (-A^3 - 12B^2)\alpha_1\beta_1^2 + 8AB\alpha_1\beta_1\beta_0 - 3A^2\alpha_1\beta_0^2 + 3\alpha_0^3\beta_0^2 + B\alpha_0^3 - 8B\alpha_0^2\beta_1^3 + 6A\alpha_0^2\beta_1^2\beta_0 - 3A^2\alpha_0^2\beta_1 + 3\alpha_0^2\beta_0^3 - 15B\alpha_0^2\beta_0 + 4AB\alpha_0\beta_1^2 - 6A^2\alpha_0\beta_1\beta_0 - 15B\alpha_0\beta_0^2 + (4A^3 + 27B^2)\alpha_0 + (A^3 + 8B^2)\beta_1^3 - 2AB\beta_1^2\beta_0 + A^2\beta_1\beta_0^2 + B\beta_0^3 + (4A^3 + 27B^2)\beta_0$\\
\newline
{$\textbf{(2)}$ Doubling formulas for $h_P$. Write $h_{2P}=cy-(u_0+u_1x)$, then: \\
\newline
$u_1 = 4B\alpha_1^4 - 4A\alpha_1^3\alpha_0 + 4A^2\alpha_1^2 - 4\alpha_1\alpha_0^3 + 36B\alpha_1\alpha_0 - 12A\alpha_0^2$\\
\newline
$u_0 = -A^2\alpha_1^4 - 8B\alpha_1^3\alpha_0 + 2A\alpha_1^2\alpha_0^2 + 6AB\alpha_1^2 - 8A^2\alpha_1\alpha_0 - \alpha_0^4 - 18B\alpha_0^2 + 4A^3 + 27B^2$\\
\newline
$c = 8B\alpha_1^3 - 8A\alpha_1^2\alpha_0 - 8\alpha_0^3$\\
\newline
{$\textbf{(3)}$ Tripling formulas for $h_P$. Write $h_{3P}=dy-(v_0+v_1x)$, then:\\
\newline
$v_1 = 1/3A^4\alpha_1^9 + 8A^2B\alpha_1^8\alpha_0 + (-4A^3 + 48B^2)\alpha_1^7\alpha_0^2 + (16A^3B + 
144B^3)\alpha_1^7 - 48AB\alpha_1^6\alpha_0^3 + (-16A^4 - 240AB^2)\alpha_1^6\alpha_0 + 10A^2\alpha_1^5\alpha_0^4 + 
192A^2B\alpha_1^5\alpha_0^2 + (8A^5 + 54A^2B^2)\alpha_1^5 - 24B\alpha_1^4\alpha_0^5 + (-112A^3 + 
144B^2)\alpha_1^4\alpha_0^3 + (96A^3B + 648B^3)\alpha_1^4\alpha_0 + 12A\alpha_1^3\alpha_0^6 - 240AB\alpha_1^3\alpha_0^4 +
(-48A^4 - 324AB^2)\alpha_1^3\alpha_0^2 + (-32A^4B - 216AB^3)\alpha_1^3 - 48A^2\alpha_1^2\alpha_0^5 + 
(64A^5 + 432A^2B^2)\alpha_1^2\alpha_0 + 3\alpha_1\alpha_0^8 - 288B\alpha_1\alpha_0^6 + (-24A^3 - 162B^2)\alpha_1\alpha_0^4
+ (288A^3B + 1944B^3)\alpha_1\alpha_0^2 + (-16A^6 - 216A^3B^2 - 729B^4)\alpha_1 + 48A\alpha_0^7 
+ (-64A^4 - 432AB^2)\alpha_0^3$\\
\newline
$v_0 = (-8/3A^3B - 64/3B^3)\alpha_1^9 + (3A^4 + 32AB^2)\alpha_1^8\alpha_0 - 16A^2B\alpha_1^7\alpha_0^2 - 8A^2B^2\alpha_1^7 + (12A^3 + 16B^2)\alpha_1^6\alpha_0^3 + (8A^3B - 144B^3)\alpha_1^6\alpha_0 + 
8AB\alpha_1^5\alpha_0^4 + 288AB^2\alpha_1^5\alpha_0^2 + (32A^4B + 216AB^3)\alpha_1^5 + 10A^2\alpha_1^4\alpha_0^5 
- 200A^2B\alpha_1^4\alpha_0^3 + (-24A^5 - 162A^2B^2)\alpha_1^4\alpha_0 + 32B\alpha_1^3\alpha_0^6 + (64A^3 + 
72B^2)\alpha_1^3\alpha_0^4 + (192A^3B + 1296B^3)\alpha_1^3\alpha_0^2 + (96A^3B^2 + 648B^4)\alpha_1^3 - 
4A\alpha_1^2\alpha_0^7 - 72AB\alpha_1^2\alpha_0^5 + (-176A^4 - 1188AB^2)\alpha_1^2\alpha_0^3 + (-192A^4B - 
1296AB^3)\alpha_1^2\alpha_0 + 64A^2\alpha_1\alpha_0^6 + (128A^5 + 864A^2B^2)\alpha_1\alpha_0^2 + 1/3\alpha_0^9 + 
72B\alpha_0^7 + (-120A^3 - 810B^2)\alpha_0^5 + (192A^3B + 1296B^3)\alpha_0^3 + (-16A^6 - 
216A^3B^2 - 729B^4)\alpha_0$\\
\newline
$d = A^4\alpha_1^8 + 24A^2B\alpha_1^7\alpha_0 + (-12A^3 + 144B^2)\alpha_1^6\alpha_0^2 + (-24A^3B - 
144B^3)\alpha_1^6 - 144AB\alpha_1^5\alpha_0^3 + (32A^4 + 144AB^2)\alpha_1^5\alpha_0 + 30A^2\alpha_1^4\alpha_0^4 + 
120A^2B\alpha_1^4\alpha_0^2 + (-8A^5 - 54A^2B^2)\alpha_1^4 - 72B\alpha_1^3\alpha_0^5 + 720B^2\alpha_1^3\alpha_0^3 +
(-96A^3B - 648B^3)\alpha_1^3\alpha_0 + 36A\alpha_1^2\alpha_0^6 - 360AB\alpha_1^2\alpha_0^4 + (48A^4 + 
324AB^2)\alpha_1^2\alpha_0^2 + 96A^2\alpha_1\alpha_0^5 + 9\alpha_0^8 + 72B\alpha_0^6 + (24A^3 + 162B^2)\alpha_0^4 - 
16/3A^6 - 72A^3B^2 - 243B^4$


\begin{thebibliography}{91}

\bibitem{handbook} R. M. Avanzi, H. Cohen, C. Doche, G. Frey, T. Lange, K. Nguyen, F. Vercauteren, {\em Handbook of Elliptic and Hyperelliptic Curve Cryptography}, Discrete Mathematics and Its Applications 34, Chapman \& Hall/CRC (2005).

\bibitem{ac07} R. M. Avanzi, E. Cesena, {\em Trace  zero varieties over  fields  of characteristic  2 for  cryptographic
applications}, Proceedings of the First Symposium on Algebraic Geometry and Its Applications -- SAGA '07 (2007), 188-215.

%\bibitem{TrZeroAvanziLange} R.M. Avanzi, T.Lange, {\em Cryptographic Applications of Trace Zero Varieties}, Preprint, 2003. 

\bibitem{TrZeroAvanzi} R. M.  Avanzi, E. Cesena, T. Lange, {\em Trace Zero Varieties for Cryptographic Applications}, SPEED-CC, Berlin, October 13th, 2009.

%\bibitem{BL1} D. J. Bernstein, T. Lange,
%%\emph{Faster addition and doubling on elliptic curves}, Advances in Cryptology - ASIACRYPT 2007, LNCS vol. 4833, Springer-Verlag (2007), 29-50.

%\bibitem{BL2} D. J. Bernstein, T. Lange,
%\emph{Inverted Edwards Coordinates}, Applied Algebra, Algebraic Algorithms and Error-Correcting Codes, LNCS vol .4851, Springer-Verlag (2007), 20-27.

%\bibitem{BLal} D. J. Bernstein, P. Birkner, M. Joye, T. Lange, C. Peters,
%\emph{Twisted Edwards Curves}, Progress in Cryptology - AFRICACRYPT 2008, LNCS vol. 5023, Springer-Verlag (2008), 389-405.

%\bibitem{Blady} G. Blady, {\em Die Weil-Restriktion elliptischer Kurven in der Kryptographie}, Master's thesis, University Essen, 2002.

%\bibitem{} P. Birkner, 
%\emph{Efficient Arithmetic on Low-Genus Curves}, Ph.D. thesis, available at http://alexandria.tue.nl/extra2/200910363.pdf.

%\bibitem{} W.Bosma, J. Cannon (eds.), 
%\emph{Handbook of MAGMA functions}, Version 2.13, Sydney, 2006. 

%\bibitem{magma} W. Bosma, J. Cannon, C.E. Playoust, {\em The Magma algebra system. I. The user language}, 
%J. Symbolic Comput. 24 (1997), 235��-265. 

%bibitem{BrJo} E.Brier, M.Joye, \emph{Weierstrass elliptic curves and side channels attacks}, Public Key Cryptography - PKC 2002, Lecture Notes in Comput. Sci., vol. 2274, Springer-Verlag, 2002, 335-345. 

\bibitem{ces08} E. Cesena, {\em Pairing with Supersingular Trace Zero Varieties Revisited}, Available at http: // eprint.iacr.org/2008/404,2008.

%\bibitem{cesenathesis} E. Cesena, {\em Trace zero varieties in pairing-based cryptography}, Ph.D. Thesis (2010), 
%available at \url{https://ricerca.mat.uniroma3.it/dottorato/Tesi/tesicesena.pdf}.

%\bibitem{} H. Cohen, G. Frey (eds.), 
%\emph{Handbook of elliptic and hyperelliptic curve cryptography}, Discrete Mathematics and its Applications, Chapman \& Hall/CRC, Boca Raton, 2006.

%\bibitem{dgprs05}  M. van Dijk, R. Granger, D. Page, K. Rubin, A. Silverberg, M. Stam, D. Woodruff, 
%{\em Practical Cryptography in High Dimensional Tori}, 
%Advances in Cryptology -- EUROCRYPT 2005, LNCS vol. 3494, Springer-Verlag (2005), 234-250. 

%\bibitem{EdEd} H. M. Edwards,
%\emph{A Normal Form for Elliptic Curves}, Bulletin of the American Mathematical Society 44 (2007), 393-422.

%\bibitem{DS1} C. Diem, J. Scholten, {\em Cover Attacks - a report for the AEHCC project}, Available at \url{http: // www.arehcc.org}, 2003. 

\bibitem{DS2} C. Diem, J. Scholten, {\em An attack on a trace-zero cryptosystem}, Available at \url{http://www.math.uni-leipzig.de/diem/preprints}.

\bibitem{Frey} G. Frey,
\emph{Applications of Arithmetical Geometry to Cryptographic Constructions}, Proceedings of the 5th International Conference on Finite Fields and Applications, Springer (1999),128-161.

%\bibitem{} W. Fulton, 
%\emph{Algebraic Curves}, W. A. Benjamin, New York, 1969.

%\bibitem{Torirepr} G. Gong, L. Harn, {\em Public-key cryptosystem based on cubic finite field extensions}, IEEE Trans. Inform. Theory \textbf{45} (1999), no. 7, 2601-2605. 

\bibitem{kob} N. Koblitz, \emph{CM-curves with good cryptographic properties}, Advances in Cryptology - Crypto 1991, Lecture Notes in Comput. Sci., vol. 576, Springer-Verlag, Berlin, 1992, 279-287.

\bibitem{EM1} E. Gorla, M. Massierer, 
\emph{Point Compression for the Trace Zero Subgroup over a Small Degree Extension Field}, 
Designs, Codes and Cryptography 75, no. 2 (2015), 335-357.

\bibitem{EM2} E. Gorla, M. Massierer,
\emph{An Optimal Representation for the Trace Zero Subgroup}, available at \url{http://arxiv.org/abs/1405.2733}.

\bibitem{langetzero} T. Lange, 
\emph{Trace zero subvarieties of genus $2$ curves for cryptosystem}, Ramanujan Math. Soc. 19, no. 1 (2004) 15-33.

%\bibitem{XTR} A. K. Lenstra, E. R. Verheul, {\em The XTR public key system} Advances in Cryptology: Proceedings of CRYPTO '01 (M. Bellare, ed), LNCS, vol. 1880, Springer, 2000, pp. 1-19. 

\bibitem{Mont} P. L. Montgomery, 
\emph{Speeding the Pollard and elliptic curve methods of factorization}, Mathematics of Computation, 48(177):243:264, January 1987.
%\bibitem{} V. S. Miller, 
%\emph{The Weil pairing, and its efficient calculation}, J. Cryptology, \textbf{17}(2004), no. 4, pp. 235-261.

\bibitem{nau99} N. Naumann, {\em Weil-Restriktion abelscher Variet\"aten}, 
Master's thesis (1999), available at \url{http://web.iem.uni-due.de/ag/numbertheory/dissertationen}.

\bibitem{rs02} K. Rubin, A. Silverberg, {\em Supersingular abelian varieties in cryptology}, Advances in Cryptology: Proocedings of CRYPTO '02 (M. Young, ed), LNCS, vol. 2442, Springer, 2002, pp. 336-353. 

%\bibitem{rs03}  K. Rubin, A. Silverberg,  {\em Torus-based cryptography}, 
%Advances in Cryptology -- CRYPTO 2003, LNCS vol. 2729 (2003), Springer-Verlag, 349-365. 

\bibitem{rs09} K. Rubin, A. Silverberg, {\em Using abelian varieties to improve pairing-based cryptography},
Journal of Cryptology 22, no. 3 (2009), 330-364.

%\bibitem{SemPol} I. Semaev,
%\emph{Summation polynomials and the discrete logarithm problem on elliptic curves}, 
%available at \url{http://eprint.iacr.org/2004/013, 2004}.

\bibitem{sil05} A. Silverberg, {\em Compression for Trace Zero Subgroups of Elliptic Curves}, 
Trends in Mathematics 8 (2005), 93-100. 

%\bibitem{SemPolEd} J. C. Faug\'ere, P. Gaudry, L. Huot, G. Renault,
%\emph{Using Symmetries in the Index Calculus for Elliptic Curves Discrete Logarithm}, 
%Journal of Cryptology 27, no. 4 (2014),595-635.

%\bibitem{LUC} P. Smith, C. Skinner, {\em A public-key cryptosystem and a digital signature system based on the Lucas function analogue to the discrete logarithms}, Advances in Cryptology: Proceedings of ASIACRYPT '94 (J. Peipzyc and R. Safavi-Naini, eds), LNCS, vol. 917, Springer, 1995, pp. 357-364.

%\bibitem{} L.C. Washington,
%\emph{Elliptic curves: Number theory and cryptography}, second ed., Discrete Mathematics and its Applications, Chapman \& Hall, Boca Raton-London-New York, 2008.


\bibitem{Weim} A. Weimerskirch, {\em The application of the Mordell-Weil group to cryptographic systems}, Master's thesis, Worcester Polytechnic Institute, Available at 
\url{http :// www.emsec.rub.de/media/crypto/attachments/files/2010/04/ms-weika.pdf}, 2001. 


\end{thebibliography}
\end{document}